\documentclass[lettersize,journal]{IEEEtran}

\usepackage{balance}
\usepackage{comment}

\usepackage{amsmath,mathtools} %
\interdisplaylinepenalty=2500
\usepackage{amsfonts,amssymb}

\usepackage{xspace} %

\usepackage{amsmath}

\usepackage{graphicx}
\graphicspath{{figs/}}

\usepackage{subcaption}

\usepackage{hyperref}
\hypersetup{
 colorlinks=true,
 linkcolor=blue,
 citecolor =blue,
 filecolor=black, 
 urlcolor=blue,
 }
\usepackage{todonotes}

\usepackage{enumitem}

\usepackage{pgfplots}\pgfplotsset{compat=1.9} 
\usepackage{grffile}
\pgfplotsset{compat=newest} %
\usetikzlibrary{plotmarks}
\usetikzlibrary{arrows.meta}
\usepgfplotslibrary{patchplots}

\usepackage{tabularx}
\usepackage{diagbox}
\newcolumntype{K}[1]{>{\centering\arraybackslash}p{#1}}

\usepackage[ruled]{algorithm}
\usepackage[noend]{algpseudocode}

\makeatletter
\newcommand{\multiline}[1]{%
 \begin{tabularx}{\dimexpr\linewidth-\ALG@thistlm}[t]{@{}X@{}}
 #1
 \end{tabularx}
}
\makeatother

\usepackage[per-mode=symbol]{siunitx}

\usepackage{cite} %
\usepackage{array} %
\usepackage{booktabs}  %

\usepackage{cite}

\newcommand{\CAV}[1]{CAV\textendash\ensuremath{#1}\xspace}
\newcommand{\HDV}[1]{HDV\textendash\ensuremath{#1}\xspace}

\newcommand{\bbsym}[1]{\ensuremath{\boldsymbol{#1}}}

\usepackage{amsthm}
\theoremstyle{definition}
\newtheorem{assumption}{Assumption}
\newtheorem{definition}{Definition}

\theoremstyle{plain}

\newtheorem{theorem}{Theorem}
\newtheorem{lemma}{Lemma}

\newcounter{remark}
\newenvironment{remark}{%
\par\vspace{3pt}\noindent\refstepcounter{remark}\textbf{Remark~\theremark:}}%
{\par\endtrivlist\unskip}

\newcounter{problem}
\newenvironment{problem}{%
\par\vspace{3pt}\noindent\refstepcounter{problem}\textbf{Problem~\theproblem:}}%

\usepackage[extramath,probability,reviewmark]{mylatexdefs}

\newcommand{\probP}{\text{I\kern-0.15em P}}

\newcommand{\BLR}{\ensuremath{\BBB}}

\usepackage{stfloats}%

\title{Stochastic Time-Optimal Trajectory Planning for Connected and Automated Vehicles in\\Mixed-Traffic Merging Scenarios}

\author{Viet-Anh Le, {\itshape{Student Member, IEEE}}, Behdad Chalaki, {\itshape{Member, IEEE}},\\Filippos N. Tzortzoglou, {\itshape{Student Member, IEEE}}, and Andreas A. Malikopoulos, {\itshape{Senior Member, IEEE}}
\thanks{This work was supported by the National Science Foundation (NSF) under Grants CNS-2149520 and CMMI-2219761.}
\thanks{Viet-Anh Le is with the Department of Mechanical Engineering, University of Delaware, Newark, DE 19716 USA and with the Systems Engineering, Cornell University, Ithaca, NY 14850 USA (e-mail: {\tt\small vl299@cornell.edu}).}
\thanks{Behdad Chalaki is with Honda Research Institute USA, Inc., Ann Arbor, MI 48103 USA (e-mail: {\tt\small behdad\_chalaki@honda-ri.com}). 
His contribution to this work draws from his research conducted during his PhD studies at the University of Delaware.
}
\thanks{Filippos N. Tzortzoglou and Andreas A. Malikopoulos are with the School of Civil and Environmental Engineering, Cornell University, Ithaca, NY 14850 USA (e-mail: {\tt\small ft253@cornell.edu}; {\tt\small amaliko@cornell.edu}).}
}

\begin{document}

\maketitle

\begin{abstract}

Addressing safe and efficient interaction between connected and automated vehicles (CAVs) and human-driven vehicles in a mixed-traffic environment has attracted considerable attention.
In this paper, we develop a framework for stochastic time-optimal trajectory planning for coordinating multiple CAVs in mixed-traffic merging scenarios.
We present a data-driven model, combining Newell's car-following model with Bayesian linear regression, for efficiently learning the driving behavior of human drivers online.
Using the prediction model and uncertainty quantification, a stochastic time-optimal control problem is formulated to find robust trajectories for CAVs.
We also integrate a replanning mechanism that determines when deriving new trajectories for CAVs is needed based on the accuracy of the Bayesian linear regression predictions.
Finally, we demonstrate the performance of our proposed framework using a realistic simulation environment. 
\end{abstract}
\begin{IEEEkeywords}
Connected and automated vehicles, mixed traffic, trajectory planning, stochastic control, Bayesian linear regression.
 \end{IEEEkeywords}

\section{Introduction}
\subsection{Motivation}

\IEEEPARstart{T}{he} advancements in connectivity and automation for vehicles present an intriguing opportunity to reduce energy consumption, greenhouse gas emissions, and travel delays while still ensuring safety requirements.
Numerous studies have demonstrated the advantages of coordinating connected and autonomous vehicles (CAVs) using control and optimization approaches across various traffic scenarios, \eg \cite{rios2018impact,zhang2019impact,ding2020penetration}.
In recent years, numerous control approaches have been presented for the coordination of CAVs, assuming a 100\% penetration rate. These approaches include time and energy-optimal control strategies 
\cite{Malikopoulos2020,malikopoulos2018decentralized,chalaki2020TITS,chalaki2020TCST,xiao2021decentralized,xu2022decentralized}, model predictive control \cite{hult2018optimal,kloock2019distributed,katriniok2022fully}, and reinforcement learning \cite{chalaki2020ICCA,Sumanth2021,zhang2023predictive} (see \cite{katriniok2022towards,rios2016survey,guanetti2018control,ersal2020connected} for surveys).
However, a transportation network with a 100\% CAV penetration rate is not expected to be realized by 2060 \cite{alessandrini2015automated}.
As CAVs will gradually and slowly penetrate the market and co-exist with human-driven vehicles (HDVs) in the following decades, addressing planning, control, and navigation for CAVs in mixed traffic, given various human driving styles, is imperative.

Several studies have shown that controlling individual automated vehicles (AVs) may not be sufficient to enhance the overall traffic condition.
For example, Wang \etal \cite{wang2022ego} showed that ego-efficient lane-changing control strategies for AVs (without coordination between vehicles) are beneficial to the entire traffic flow only if the penetration rate of AVs is less than 50\%.
Thus, AVs should be connected to share information and be coordinated to benefit the entire mixed traffic.
However, this problem imposes significant challenges for several reasons.
First, control methods for CAVs need to integrate human driving behavior and human-AV interaction to some extent. Approaches not accounting for these factors may result in conservative CAV behavior to prioritize safety, potentially leading to efficiency degradation.
Moreover, optimizing the behavior for CAVs requires not only some standard metrics such as safety, fuel economy, or average travel time but also social metrics like motion naturalness and human comfort \cite{bellem2016objective}, which can, at times, be challenging to quantify.
Finally, both learning and control methods must be computationally efficient and scalable for real-time implementation.
Therefore, in this paper, we aim to address the coordination problem for CAVs in mixed traffic while considering merging scenarios as a representative example.

\subsection{Literature Review}

In this section, we summarize the state of the art related to planning, control, and navigation for CAVs in mixed traffic.
A significant number of articles have considered connected cruise control or platoon formation for CAVs in mixed traffic,
\eg \cite{jin2018connected,jin2020analysis,ozkan2021socially,wang2022data,mahbub2021_platoonMixed,mahbub2023safety,fu2023cooperative,mahbub2022NHM}, where the main objective is to guarantee string stability between CAVs and HDVs. 
However, in this section, we focus more on research efforts that address the problem in traffic scenarios such as merging at roadways and roundabouts, crossing intersections, and lane-merging or passing maneuvers.
In these scenarios, vehicles must interact to complete the tasks not only safely but also efficiently, \eg improving travel time, avoiding gridlocks, and minimizing traffic disruption and human discomfort.
These problems present a more intricate challenge due to their multi-objective nature. 
The current state-of-the-art methods of planning, control, and navigation for CAVs in ``interaction-driven" mixed-traffic scenarios can be roughly classified into two main and emerging categories: \emph{reinforcement learning} and \emph{optimization-based methods}.

(1) \emph{Reinforcement learning (RL):} In approaches using RL, the aim is to learn control policies for CAVs, usually trained using deep neural networks and trajectories obtained from traffic simulation \cite{bouton2019cooperation,chen2023deep,zhou2022multi}. Although these approaches impose several technical challenges for implementation \cite{Malikopoulos2022a}, especially in situations where the problem has a nonclassical information structure \cite{Malikopoulos2021}, nevertheless, they can be used to provide a benchmark and draw helpful concluding remarks.
To enhance the social coordination of RL policies with human drivers, the concept of social value orientation (SVO) was incorporated into the reward functions \cite{toghi2021cooperative,toghi2022social,valiente2023learning,valiente2022robustness}. 
RL algorithms generally do not guarantee real-time safety constraints, so such approaches might need to be combined with other techniques for safety-critical control, such as with control barrier function \cite{udatha2023reinforcement}, shielding \cite{inala2021safe}, or lower-level model predictive control \cite{brito2022learning}.
Inverse RL \cite{kuderer2015learning} or imitation learning \cite{acerbo2020safe} have been used to learn the reward functions of human drivers and demonstrate how CAVs can perform human-like behaviors.

(2) \emph{Optimization-based methods:} Such methods include optimal control and model predictive control (MPC) to find the control actions for the CAVs.
A large number of control methods have been built upon MPC since it can handle multiple objectives and constraints, and it exploits the benefits of long-term planning and replanning at every time step for robustness against uncertainty caused by drivers.
A very common approach is game-theoretic MPC, where the behavior of human drivers is described through some objective functions. 
The problem formulation can be based on Stackelberg (or leader-follower) game \cite{fisac2019hierarchical,hang2020integrated,wang2021socially}, partially observable stochastic game \cite{sadigh2018planning}, or potential game \cite{liu2023potential,evens2022learning}, where the objective functions that best describe human driving behavior can be learned using inverse RL \cite{kuderer2015learning}.
Game-theoretic MPC can be integrated with social factors such as SVO, which were presented in \cite{schwarting2019social,zhao2021yield,li2023interaction}, to develop socially-compatible control designs for CAVs.
Another common MPC approach is stochastic MPC, where the driving behavior of HDVs is modeled as stochastic uncertainties, and then MPC problems are formulated as stochastic optimization problems \cite{hu2022active,nair2022stochastic_b,schuurmans2023safe}. 
Meanwhile, many recent studies have leveraged the advancement in deep learning-based human prediction models, which leads to learning-based MPC \cite{schmerling2018multimodal,bae2022lane,gupta2023interaction,venkatesh2023connected}.
MPC approach was combined with Hamilton-Jacobi reachability-based safety analysis in \cite{leung2020infusing,hu2022sharp,tian2022safety} to guarantee safety under worst-case control actions of the HDVs.
A mixed-integer MPC approach was considered in \cite{bhattacharyya2023automated,larsson2021pro} where binary variables are used to formulate the collision avoidance constraints. Another MPC approach with weight adaptation strategies for different human driving behaviors was reported in \cite{le2023acc,le2022cdc}.
Some recent studies used an optimal control framework based on Hamiltonian analysis for improving both time and energy efficiency simultaneously \cite{sabouni2023merging,li2023cooperative,chalaki2023minimally}. 

The aforementioned research efforts have addressed the planning, control, and navigation problems for CAVs at a single-vehicle level. At the same time, only a limited number of research articles have attempted to address the coordination problem in mixed traffic for multiple CAVs.
For example, Yan and Wu \cite{yan_reinforcement_2021} presented a multi-agent RL framework for CAVs in microscopic simulation while the human drivers are simulated by a car-following model.
Peng \etal \cite{peng_connected_2021} considered two CAVs that navigate multiple HDVs in a signal-free intersection and designed a deep RL policy to coordinate the CAVs.
Buckman \etal \cite{buckman_sharing_2019} presented a centralized algorithm for socially compliant navigation at an intersection, given the social preferences of the vehicles.
Liu \etal \cite{liu2023safety} presented a recursive optimal control method for mixed-traffic on-ramp merging utilizing a control barrier function and a control Lyapunov function.
Mixed-integer optimization has been considered in \cite{faris2022optimization,ghosh2022traffic} for deriving the optimal order of the vehicles crossing the intersections. 
In recent work \cite{le2023coordination}, we presented a control framework that aims to derive time-optimal trajectories for CAVs in a mixed-traffic merging scenario given the HDVs’ future trajectories predicted from Newell’s car-following model \cite{newell_simplified_2002}.
The time-optimal trajectories are then combined with a safety filter based on control barrier functions.

\begin{figure}[!t]
 \begin{center}
 \includegraphics[scale=0.56, bb = 150 380 600 680, clip=true]{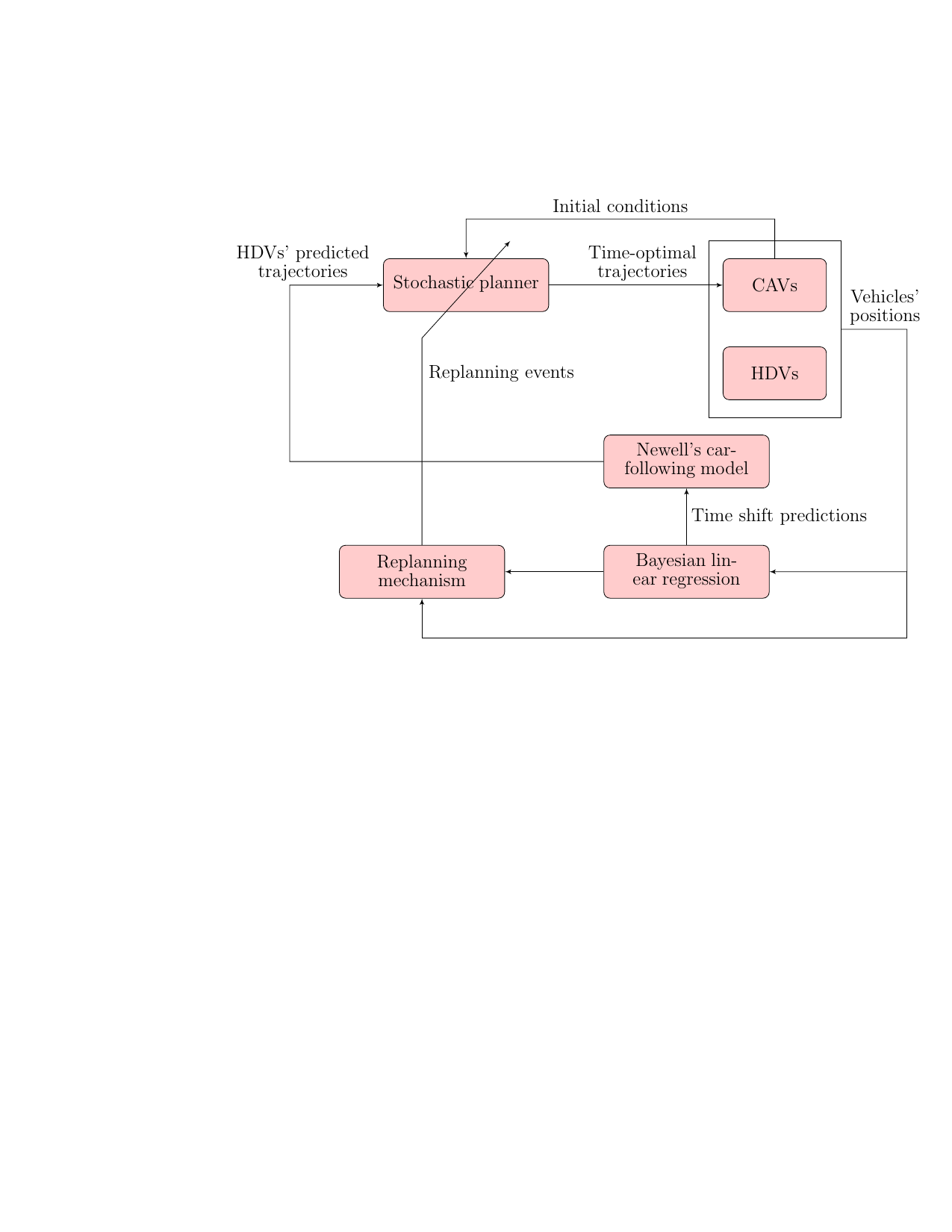}
 \end{center}
\caption{A diagram illustrating the proposed framework.} 
\label{fig:diagram}
\end{figure}

\subsection{Contributions and Organization}
In this paper, we propose a framework for trajectory planning based on stochastic control that can guarantee \emph{optimal, interaction-aware, robust, and safe} maneuvers for CAVs in mixed-traffic merging scenarios.
First, we consider the time-optimal control problem for trajectory planning of the CAVs, which utilizes the closed-form solution of a low-level energy-optimal control problem and satisfies state, input, and safety constraints \cite{Malikopoulos2020}.
Since the trajectory planning problem requires a specific prediction model for HDV's future trajectories, we use a \emph{data-driven Newell's car-following model} in which the time shift, a parameter that characterizes the personal driving behavior, is learned online using \emph{Bayesian linear regression} (BLR).
The use of data-driven Newell's car-following model with BLR allows us to not only predict the future trajectory and merging time of each HDV but also quantify the level of uncertainty in the predictions.
Using the predictions, we formulate a stochastic time-optimal control problem in which the safety constraints are formulated as probabilistic constraints for robustness without over-conservatism.
To address the potential discrepancy between the prediction model and the actual behavior of HDVs, we develop a \emph{replanning mechanism} based on checking the accuracy of the last stored BLR prediction for each HDV with the actual observation.
The overall structure of our proposed framework can also be illustrated in Fig.~\ref{fig:diagram}.
We validate the proposed framework's effectiveness in ensuring safe maneuvers and improving travel time through numerical simulations conducted in a commercial software.

In summary, the main contributions of this paper are threefold:
\begin{enumerate}[leftmargin=*]
\item We use a data-driven Newell's car-following model where BLR is utilized to calibrate the time shift for each human driver.
\item We formulate a stochastic time-optimal control problem with probabilistic constraints to derive robust trajectories for CAVs.
\item We develop a replanning mechanism based on assessing the accuracy of the BLR predictions.
\end{enumerate}

The remainder of the paper is organized as follows.
In Section~\ref{sec:problem}, we formulate the problem of coordinating CAVs in a mixed-traffic merging scenario and provide the preliminary on time-optimal trajectory planning in a deterministic setting. 
In Section~\ref{sec:human}, we present the data-driven Newell's car-following model with BLR for predicting the future trajectories of HDVs. 
In Section~\ref{sec:stochastic}, we develop a stochastic trajectory planning mechanism. Finally, in Section~\ref{sec:sim}, we numerically validate the effectiveness of the proposed framework in a simulation environment, and we draw concluding remarks in Section~\ref{sec:conc}.

\begin{figure*}[!t]
 \begin{center}
 \includegraphics[scale=0.48, bb = 0 50 950 510, clip=true]{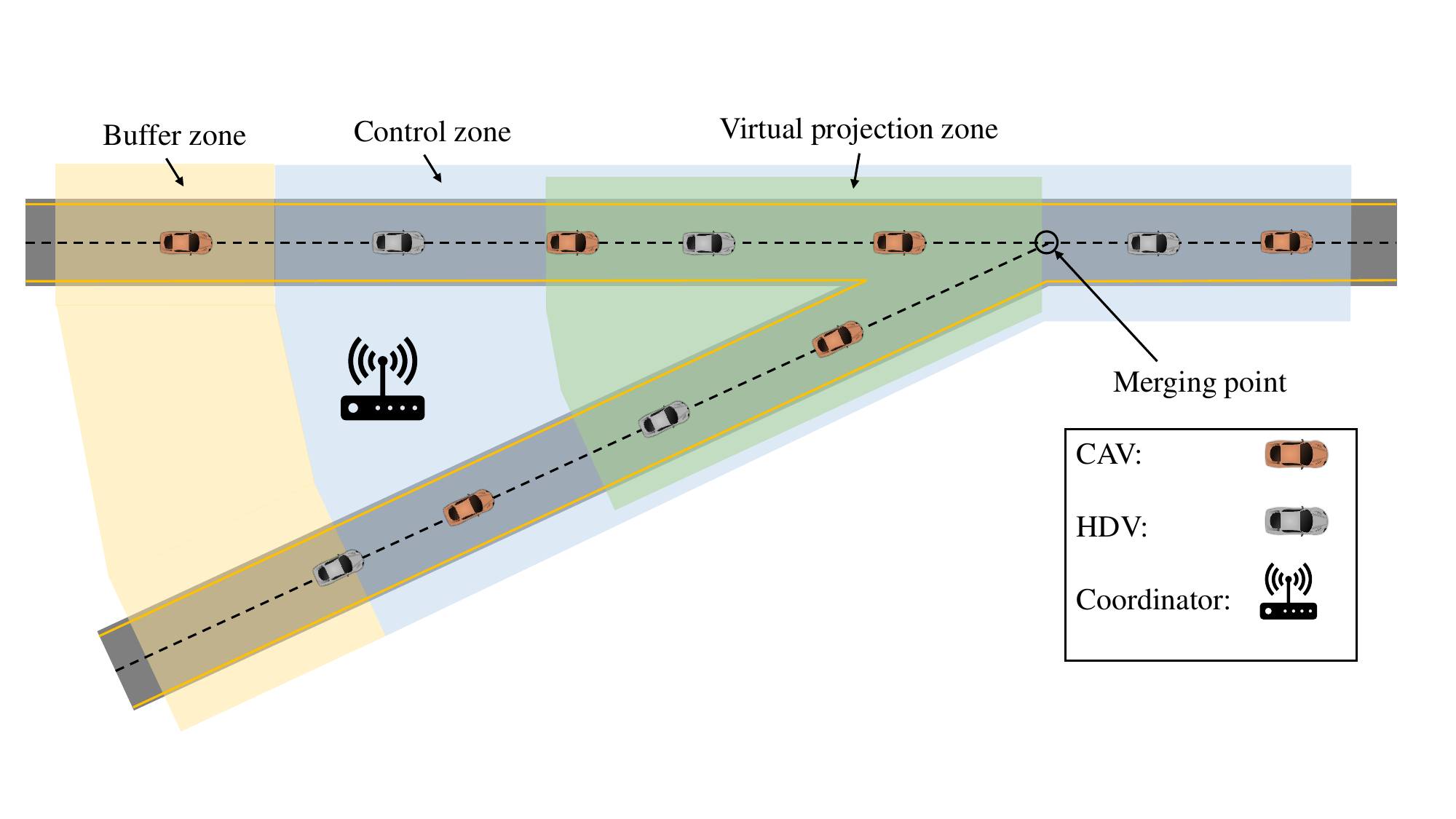}
 \end{center}
\caption{A merging scenario with two roadways intersecting at a merging point. The control zone and buffer zone are represented by the blue and yellow areas, respectively. In the virtual projection zone (green area), virtual projection is utilized (see Section~\ref{sec:human}).} 
\label{fig:scenario}
\end{figure*}

\section{Problem Formulation and Preliminaries}
\label{sec:problem}

In this section, we first introduce the problem of effectively coordinating multiple CAVs in a merging scenario, considering the presence of HDVs. 
Subsequently, we provide the preliminary materials on deterministic time-optimal trajectory planning for CAVs based on an earlier optimal control framework \cite{Malikopoulos2020}.

\subsection{Problem Formulation}

We consider the problem of coordinating multiple CAVs, co-existing with HDVs, in a merging scenario (Fig.~\ref{fig:scenario}), where two merging roadways intersect at a position called a merging point. 
We define a \emph{control zone} and a \emph{buffer zone}, located upstream of the control zone, which are represented by blue and yellow areas, respectively, in Fig.~\ref{fig:scenario}.
Within the control zone, the CAVs are controlled by the proposed framework, while in the buffer zone, CAVs are controlled using any adaptive cruise control methods.
We consider that a \textit{coordinator} is available who has access to the positions of all vehicles (including HDVs and CAVs). 
The coordinator starts collecting trajectory data of any HDVs once they enter the buffer zone so that at the control zone entry the data is sufficient for learning the first BLR model (see Section~\ref{sec:human}).
We consider that the CAVs and the coordinator can exchange information inside the control zone and buffer zone. 
Next, we provide some necessary definitions for our exposition.
\begin{definition}
\label{def:set_vehicle}
Let ${\LLL(t) = \{1,\ldots,L(t)\}}$, ${t\in\mathbb{R}_{\ge0}}$, be the set of vehicles traveling inside the control zone, where ${{L}(t)\in\mathbb{N}}$ is the total number of vehicles. 
Let ${\AAA(t) \subset \LLL(t)}$ and ${\HHH(t) \subset \LLL(t)}$ be the sets of CAVs and HDVs, respectively.
Note that the indices of the vehicles are determined by the order they enter the control zone. 
\end{definition}

\begin{definition}
For a vehicle $i \in \LLL(t)$, let ${\RRR_{i,\mathrm{S}}(t) \subset \LLL (t)}$ and ${\RRR_{i,\mathrm{N}}(t) \subset \LLL (t)}$, ${t\in\mathbb{R}_{\ge0}}$, be the sets of vehicles inside the control zone traveling on the same road as vehicle $i$ and on the neighboring road, respectively.
\end{definition}

Let $p^0$, $p^{\rm m}$, and $p^{\rm f} \in \RR$ be the positions of the control zone entry, the merging point, and the control zone exit, respectively.
Without loss of generality, we can set $p^{\rm m} = 0$.
We consider that the dynamics of each vehicle ${i \in \LLL (t)}$ are described by a double integrator model as follows
\begin{equation}\label{eq:model2}
\begin{split}
\dot{p}_{i}(t) &= v_{i}(t), 
\\
\dot{v}_{i}(t) &= u_{i}(t), 
\end{split} 
\end{equation}
where ${p_{i}\in\mathcal{P}}$, ${v_{i}\in\mathcal{V}}$, and
${u_{i}\in\mathcal{U}}$ denote the longitudinal position of the rear bumper, speed, and control input (acceleration/deceleration) of the vehicle, respectively. 
The sets ${\mathcal{P}, \mathcal{V},}$ and $\mathcal{U}$ are compact subsets of $\RR$. 
The control input is bounded by 
\begin{equation}\label{eq:uconstraint}
u_{\min} \leq u_{i}(t)\leq u_{\max},\quad \forall i \in \LLL(t),
\end{equation}
where ${u_{\min}<0}$ and ${u_{\max}>0}$ are the minimum and maximum control inputs, respectively, as designated by the physical acceleration and braking limits of the vehicles, or limits that can be imposed for driver/passenger comfort.
Next, we consider the speed limits of the CAVs, 
\begin{equation}\label{eq:vconstraint} 
v_{\min} \leq v_i(t) \leq v_{\max},\quad \forall i \in \LLL (t), 
\end{equation}
where ${v_{\min}> 0}$ and ${v_{\max}>0}$ are the minimum and maximum allowable speeds. 
Note that HDVs can violate the imposed speed limits. 
However, we make the following assumption.
\begin{assumption}
\label{assp:pos-speed}
The speed of HDVs is always positive, \ie ${v_j(t) > 0,\; \forall j \in \HHH (t)}$.
\end{assumption}

In practice, if HDVs come to a temporary full stop, Assumption~\ref{assp:pos-speed} can still be satisfied by assuming a sufficiently small lower bound on the speed.

Next, let $t_i^0$, $t_i^{\rm m}$, and $t_i^{\rm f} \in \mathbb{R}_{\ge0}$ be the times at which each vehicle $i$ enters the control zone, reaches the merging point, and exits the control zone, respectively.
To avoid conflicts between vehicles in the control zone, we impose two types of constraints: (1) lateral constraints between vehicles traveling on different roads; and (2) rear-end constraints between vehicles traveling on the same road.
Specifically, to prevent a potential conflict between \CAV{i} and a vehicle ${k \in \RRR_{i,\mathrm{N}}(t)}$ traveling on the neighboring road, we require a minimum time gap ${\delta_{\rm{l}} \in \mathbb{R}_{\ge0}}$ between the time instants ${t_i^{\rm f}}$ and $t_k^{\rm f}$ when the \CAV{i} and vehicle $k$ cross the merging point, \ie
\begin{equation}
\label{eq:lateral_constraint}
\abs{t_i^{\rm m} - t_k^{\rm m}} \geq \delta_{\rm{l}}. 
\end{equation}
To prevent rear-end collision between \CAV{i} and its immediate preceding vehicle $k$ traveling on the same road, \ie ${k = \max \, \{ j \in \RRR_{i,\mathrm{S}}(t) \; | \; j < i \}}$, we impose the following rear-end safety constraint:
\begin{equation}
\label{eq:rearend_constraint}
 p_{k}(t - \delta_{\rm r}) - p_{i}(t) \geq d_{\min},\, t \in [t_i^{0}, t_k^{\rm f}],
\end{equation}
where ${d_{\min} \in \mathbb{R}_{\ge0}}$ and ${\delta_{\rm r} \in \mathbb{R}_{\ge0}}$ are the minimum distance at a standstill and safe time gap. Note that $p_{k}(t - \delta_{\rm r})$ denotes the position of vehicle $k$ at time instant $t - \delta_{\rm r}$. 
In addition, we need to guarantee the rear-end safety constraint \eqref{eq:rearend_constraint} after the merging point between each \CAV{i} and a vehicle ${k \in \RRR_{i,\mathrm{N}} (t)}$ entering the control zone on the neighboring road and crosses the merging point immediately before \CAV{i} as follows
\begin{equation}
\label{eq:rearend_constraint-2}
p_{k}(t - \delta_{\rm r}) - p_{i}(t) \geq d_{\min},\, t \in [t_i^{\rm m}, t_k^{\rm f}],
\end{equation}
for ${k = \max \{ j \in \RRR_{i,\mathrm{N}}(t) \;|\; t_j^{\rm m} < t_i^{\rm m} \} }$.

\subsection{Time-Optimal Trajectory Planning}

Next, we explain the deterministic time-optimal trajectory planning framework developed initially for coordinating CAVs with a $100$\% penetration \cite{Malikopoulos2020}.
We start the exposition with the unconstrained solution of an energy-optimal control problem for each \CAV{i} \cite{malikopoulos2018decentralized}.
Given a fixed $t_i^{\rm f}$ that \CAV{i} exits the control zone, the energy-optimal control problem aims at finding the optimal control input (acceleration/deceleration) for each CAV by solving the following problem.
\begin{problem}\label{prb:energy-optimal-1}
(\textbf{Energy-optimal control problem})
Let $t_{i}^0$ and $t_{i}^{\rm f}$ be the times that \CAV{i} enters and exits the control zone, respectively. 
The energy-optimal control problem for \CAV{i} at $t_{i}^0$ is given by: 
\begin{equation}
\label{eq:energy_cost}
\begin{split}
&\minimize_{u_i(t)\in \UUU} \quad \frac{1}{2} \int_{t^{0}_{i}}^{t^{\rm f}_{i}} u^2_i(t)~\mathrm{d}t,
\\ 
& \subjectto 
\\
& \quad \eqref{eq:model2}, \eqref{eq:uconstraint},\eqref{eq:vconstraint},
\\
& \quad \eqref{eq:rearend_constraint},\; k = \max \, \{ j \in \RRR_{i,\mathrm{S}}(t_i^0) \; | \; j < i \}, 
\\
& \quad \eqref{eq:rearend_constraint-2},\; k = \max \, \{ j \in \RRR_{i,\mathrm{N}}(t_i^0) \; | \; t_j^{\rm m} < t_i^{\rm m} \},
\\
& \text{given:} 
\\
& \quad p_i (t_i^0) = p^0, \,\, v_i (t_i^0) = v_i^0, 
 \,\, p_i (t_i^{\rm f}) = p^{\rm f},
\end{split}
\end{equation}
where $v_i^0$ is the speed of \CAV{i} at the entry point.
The boundary conditions in \eqref{eq:energy_cost} are set at the entry and exit of the control zone. 
\end{problem}

The closed-form solution of Problem~\ref{prb:energy-optimal-1} for each \CAV{i} can be derived using the Hamiltonian analysis.
If none of the state and control constraints are active, the Hamiltonian becomes \cite{malikopoulos2018decentralized}
\begin{equation}\label{c1}
\begin{aligned}
H_i(t,p_i(t),v_i(t),u_i(t)) = \frac{1}{2}u_i^2(t) + \lambda_i^p v_i (t) + \lambda_i^v u_i(t),
\end{aligned}
\end{equation}
where \(\lambda_i^p\) and \(\lambda_i^v\) are co-states corresponding to position and speed, respectively.
The Euler-Lagrange equations of optimality are given by
\begin{align}
\dot{\lambda}_i^p & =-\frac{\partial H_i}{\partial p_i}=0, \label{euler:optimallambdav} 
\\ 
\dot{\lambda}_i^v &= -\frac{\partial H_i}{\partial v_i}= -\lambda_i^p, \label{euler:optimallambdap}
\\
\frac{\partial H_i}{\partial u_i} &= u_i+\lambda_i^v=0 \label{euler:optimalU}.
\end{align}

Using the Euler-Lagrange optimality conditions \eqref{euler:optimallambdav}-\eqref{euler:optimalU} to the Hamiltonian \eqref{c1}, we obtain the optimal control law and trajectory as follows
\begin{equation}\label{eq:optimalTrajectory}
\begin{split}
u_i(t) &= 6 \phi_{i,3} t + 2 \phi_{i,2}, \\
v_i(t) &= 3 \phi_{i,3} t^2 + 2 \phi_{i,2} t + \phi_{i,1}, 
\\
p_i(t) &= \phi_{i,3} t^3 + \phi_{i,2} t^2 + \phi_{i,1} t + \phi_{i,0},
\end{split}
\end{equation} 
where ${\phi_{i,3}, \phi_{i,2}, \phi_{i,1}, \phi_{i,0} \in \RR}$ are constants of integration.
Since the speed of \CAV{i} is not specified at the exit time $t_i^{\rm f}$, we consider the boundary condition \cite{bryson1975applied}
\begin{equation}\label{eq:lambdaV-sb}
 \lambda_i^v(t_i^{\rm f}) = 0.
\end{equation}
By substituting \eqref{eq:lambdaV-sb} into \eqref{euler:optimalU} at $t_i^{\rm f}$, \ie ${u_i(t_i^{\rm f}) + \lambda_i^v(t_i^{\rm f}) = 0}$, we obtain a terminal condition ${u_i(t_i^{\rm f}) = 0}$.
Given the boundary conditions in \eqref{eq:energy_cost} and ${u_i(t_i^{\rm f}) = 0}$, and considering $t_i^{\rm f}$ is known, the constants of integration can be found by:
\begin{equation}
\label{eq:phi_i}
\bbsym{\phi}_i =
\begin{bmatrix}
\phi_{i,3}
\\
\phi_{i,2}
\\
\phi_{i,1}
\\
\phi_{i,0}
\end{bmatrix}
= 
\begin{bmatrix}
(t_i^0)^3 & (t_i^0)^2 & t_i^0 & 1 
\\
3(t_i^0)^2 & 2t_i^0 & 1 & 0 
\\
(t_i^{\rm f})^3 & (t_i^{\rm f})^2 & t_i^{\rm f} & 1 
\\
6t_i^{\rm f} & 2 & 0 & 0 
\end{bmatrix}^{-1}
\begin{bmatrix}
p^0 
\\
v_i^0 
\\
p^{\rm f} 
\\
0
\end{bmatrix}.
\end{equation}

Note that using the Cardano's method \cite{cardano2007rules}, the time trajectory $t_i(p_i)$ as a function of the position is given by
\begin{multline}\label{eq:cardano}
 t_i(p_i) = \\
 \sqrt[3]{ - \frac{1}{2} \left(\omega_{i,1} + \omega_{i,2}~ p_i \right) + \sqrt{\frac{1}{4} \left(\omega_{i,1} + \omega_{i,2}~ p_i \right) ^ 2 + \frac{1}{27}\omega_{i,0} ^ 3}} +\\
\sqrt[3]{ - \frac{1}{2} \left(\omega_{i,1} + \omega_{i,2}~ p_i \right) - \sqrt{\frac{1}{4} \left(\omega_{i,1} + \omega_{i,2}~ p_i \right) ^ 2 + \frac{1}{27}\omega_{i,0} ^ 3}}\\
+ \omega_{i,3},\quad \quad p_{i} \in \mathcal{P},
 \end{multline}
 \begin{align}
 \omega_{i,0} &= \frac{\phi_{i,1}}{\phi_{i,3}} - \frac{1}{3}\left(\frac{\phi_{i,2}}{\phi_{i,3}}\right)^2, \label{eqn:omega0}\\
 \omega_{i,1} &= \frac{1}{27}\left[2\left(\frac{\phi_{i,2}}{\phi_{i,3}}\right)^3 - \frac{9 \phi_{i,2} \cdot \phi_{i,1}}{(\phi_{i,3}) ^ 2} \right] + \frac{\phi_{i,0}}{\phi_{i,3}} \label{eqn:omega1}, \\
 \omega_{i,2} &= - \frac{1}{\phi_{i,3}}, \quad\quad
 \omega_{i,3} = - \frac{\phi_{i,2}}{3 \phi_{i,3}},\label{eqn:omeg2,a3}
\end{align}
where $\omega_{i,3}, \omega_{i,2}, \omega_{i,1}$, and $\omega_{i,0}\in\mathbb{R}$ such that ${\frac{1}{4}(\omega_{i,1} + \omega_{i,2}~ p_i) ^ 2 + \frac{1}{27}\omega_{i,0} ^ 3 > 0}$, and they are all defined in terms of $\phi_{i,3}$, $ \phi_{i,2}$, $ \phi_{i,1}$, $ \phi_{i,0}\in\mathbb{R}$, with $\phi_{i,3} \neq 0$. 
The algebraic derivation of \eqref{eq:cardano} is tedious but standard \cite{Malikopoulos2020}, and thus omitted. 
We use \eqref{eq:cardano} to compute the merging time $t_i^{\rm m}$.

Next, we formulate the time-optimal control problem to minimize the travel time and guarantee all the constraints for CAVs given the energy-optimal trajectory \eqref{eq:optimalTrajectory} at $t_i^0$.
We enforce this unconstrained trajectory as a motion primitive to avoid the complexity of solving a constrained optimal control problem by piecing constrained and unconstrained arcs together \cite{malikopoulos2018decentralized}.
We refer to this problem as \emph{deterministic planning problem} to differentiate it from the stochastic problems discussed in the next sections.

\begin{problem} 
(\textbf{Deterministic planning at the control zone entry})
At the time $t_i^{0}$ of entering the control zone, let ${\mathcal{T}_i(t_i^0)=[\underline{t}_i^{\rm f}, \overline{t}_i^{\rm f}]}$ be the feasible range of travel time under the state and input constraints of \CAV{i} computed at $t_i^0$. 
The formulation for computing $\underline{t}_i^{\rm f}$ and $\overline{t}_i^{\rm f}$ can be found in \cite{chalaki2020experimental}.
Then \CAV{i} solves the following time-optimal control problem to find the minimum exit time $t_i^{\rm f} \in \mathcal{T}_i(t_i^0)$ that satisfies all state, input, and safety constraints
\label{prb:optimal_MZ}
\begin{align}
\begin{split}
 \label{eq:tif_1}
 &\minimize_{t_i^{\rm f} \in \mathcal{T}_i(t_i^0)} \quad t_i^{\rm f} 
 \\
 & \subjectto 
 \\
 & \quad \eqref{eq:uconstraint}, \eqref{eq:vconstraint}, 
 \\
 & \quad \eqref{eq:lateral_constraint},\; \forall \, k \in \RRR_{i,\mathrm{N}}(t_i^0), 
 \\
 & \quad \eqref{eq:rearend_constraint},\; k = \max \, \{ j \in \RRR_{i,\mathrm{S}}(t_i^0) \; | \; j < i \}, 
 \\
 & \quad \eqref{eq:rearend_constraint-2},\; k = \max \, \{ j \in \RRR_{i,\mathrm{N}}(t_i^0) \; | \; t_j^{\rm m} < t_i^{\rm m} \},
 \\ 
 & \quad \eqref{eq:optimalTrajectory}, 
 \\
 & \text{given:} 
 \\
 & \quad p_i (t_i^0) = p^0, \, v_i (t_i^0) = v_i^0, \\
 & \quad p_i (t_i^{\rm f}) = p^{\rm f}, \, u_i (t_i^{\rm f}) = 0.
\end{split}
\end{align}
\end{problem}

The computation steps for numerically solving Problem~\ref{prb:optimal_MZ} are summarized as follows or can be also found in \cite{chalaki2020experimental}.
First, we initialize ${t_i^{\rm f} = \underline{t}_i^{\rm f}}$, and compute the parameters $\bbsym{\phi}_i$ using \eqref{eq:phi_i}.
We evaluate all the state, control, and safety constraints. 
If none of the constraints is violated, we return the solution; otherwise, $t_i^{\rm f}$ is increased by a step size. 
The procedure is repeated until the solution satisfies all the constraints.
By solving Problem~\ref{prb:optimal_MZ}, the optimal exit time $t_i^{\rm f}$ along with the optimal trajectory and control law \eqref{eq:optimalTrajectory} are obtained for \CAV{i} for ${t\in [t_i^0, t_i^{\rm f}]}$.

\begin{remark}
If a feasible solution to Problem~\ref{prb:optimal_MZ} exists, then the solution is a cubic polynomial that guarantees none of the constraints become active.
In case the solution of Problem~\ref{prb:optimal_MZ} does not exist,
we can derive the optimal trajectory for the CAVs by piecing together the constrained and unconstrained arcs until the solution does not violate any constraints (see \cite{malikopoulos2018decentralized}).
\end{remark}

\section{Human Drivers’ Trajectory Prediction}
\label{sec:human}

To solve the trajectory planning problem for \CAV{i}, the trajectories and merging times of all vehicles having potential conflicts with \CAV{i} must be available.
When \CAV{i} enters the control zone, the time trajectories of all CAVs traveling inside the control zone can be obtained from the coordinator.
However, the time trajectories of the HDVs are not known.
Next, we propose an approach to predict the trajectories of the HDVs traveling inside the control zone by combining Newell's car-following model \cite{newell_simplified_2002} and BLR \cite[Chapter~3]{bishop2006pattern}.

\subsection{Bayesian Linear Regression}

Consider $N\in\mathbb{R}$ noisy observations $y_i$ of a linear model with Gaussian noise: $y_i = \bbsym{\theta}^\top \bbsym{x}_i + e_i$, for $i = 1, \dots, N$, where $\bbsym{\theta} \in \RR^M$, $M\in\mathbb{N},$ is the vector of weights, $\bbsym{x}_i \in \RR^M$ is the vectors of inputs, $e_i \sim \NNN (0, \beta^{-1})$ is the Gaussian noise where $\beta \in \mathbb{R}_{\ge0}$ is the precision (the inverse of variance).
Let $\OOO = (\bbsym{X}, \bbsym{Y})$ be the tuple of observation data for inputs and outputs, where $\bbsym{X} = [\bbsym{x}_1^\top, \dots, \bbsym{x}_N^\top]^\top \in \RR^{M\times N}$, $\bbsym{Y} = [y_1, \dots, y_N]^\top \in \RR^{N}$.
The goal of BLR is to find the maximum likelihood estimate for $\bbsym{\theta}$ given the observation data.

If we assume a Gaussian prior over the weights $\bbsym{\theta} \sim \NNN(\bbsym{0}, \alpha^{-1} \bbsym{I}_M)$ where $\alpha \in \mathbb{R}_{\ge0}$ is the precision and $\bbsym{I}_M$ is the $M \times M$ identity matrix, and the Gaussian likelihood $\probP (\bbsym{Y} \, |\, \bbsym{X}, \bbsym{\theta}, \beta) = \NNN (\bbsym{\theta}^\top \bbsym{X}, \beta^{-1} \bbsym{I}_N)$,
then this posterior distribution is
\begin{equation}
\probP (\bbsym{\theta} \,| \, \bbsym{Y}, \bbsym{X}, \alpha, \beta) \propto 
\probP (\bbsym{Y} \,|\, \bbsym{X}, \bbsym{\theta}, \beta)\, \probP (\bbsym{\theta} \,|\, \alpha).
\end{equation}
The likelihood function $\probP (\bbsym{Y} \, |\, \bbsym{X}, \bbsym{\theta})$ is computed by 
\begin{equation}
\probP (\bbsym{Y}\, | \, \bbsym{X}, \bbsym{\theta}, \beta) = \frac{1}{N} \prod_{i=1}^{N}\, \NNN (y_i \,|\, \bbsym{x}_i, \bbsym{\theta}, \beta).
\end{equation}
The log of the likelihood function can be written as
\begin{equation}
\log\, \probP (\bbsym{Y}\, | \, \bbsym{X}, \bbsym{\theta}, \beta) 
= \frac{N}{2} \log \beta - \frac{N}{2} \log 2\pi - \beta \EEE (\bbsym{\theta}),
\end{equation}
where $\EEE (\bbsym{\theta})$ is the sum-of-squares error function coming from the exponent of the likelihood function which is computed by
\begin{equation}
\EEE (\bbsym{\theta}) = \frac{1}{2} \sum_{i=1}^{N} (y_i - \bbsym{\theta}^\top \bbsym{x}_i)^2 = \frac{1}{2} \norm{\bbsym{Y} - \bbsym{X} \bbsym{\theta}}_2^2.
\end{equation}
Since the posterior is proportional to the product of likelihood and prior, the log of the posterior distribution is computed as follows
\begin{equation}
\label{eq-posterior}
\log\, \probP (\bbsym{\theta} \, | \, \bbsym{Y}, \bbsym{X}, \alpha, \beta) 
= - \beta \EEE (\bbsym{\theta}) - \frac{1}{2} \alpha \bbsym{\theta}^\top \bbsym{\theta} + c,
\end{equation}
where $c$ is a constant.
Therefore, the maximum-a-posteriori (MAP) estimate of $\bbsym{\theta}$ can be found by maximizing the log posterior \eqref{eq-posterior}, which has the following analytical solution: 
\begin{align}
\label{eq:dist-blr}
\bbsym{\mu}_{\bbsym{\theta}} &= \beta \bbsym{\Sigma}_{\bbsym{\theta}} \bbsym{X}^\top \bbsym{Y}, \\
\bbsym{\Sigma}_{\bbsym{\theta}}^{-1} &= \beta \bbsym{X}^\top \bbsym{X} + \alpha \bbsym{I}_m,
\end{align}
while estimates for priors, \ie $\alpha$ and $\beta$, can be obtained by the empirical Bayes method (also known as maximum marginal likelihood \cite{bishop2006pattern}).

Once the BLR model is trained, the posterior predictive distribution for $\bbsym{\theta}$ is a Gaussian distribution $\NNN(\bbsym{\mu}_{\bbsym{\theta}}, \bbsym{\Sigma}_{\bbsym{\theta}})$ with the mean and covariance matrix given in \eqref{eq:dist-blr}.
At a new input $\bbsym{x}_*$, the predicted mean and variance are given by
\begin{align}
\mu_{*} &= \bbsym{\mu}_{\bbsym{\theta}}^\top \bbsym{x}_*, \\
\sigma_*^2 &= \bbsym{x}_*^\top \bbsym{\Sigma}_{\bbsym{\theta}} \bbsym{x}_* + \beta^{-1}.
\end{align}

BLR is highly suitable for online learning implementation due to its light computation, where the complexity generally is $\OOO(M^2\,N)$, \ie it scales linearly with the training data size and quadratically with the input dimension. 
Moreover, we can check for retraining by comparing the new observation to the confidence interval or by considering prediction uncertainty. This approach can avoid overly frequent model retraining.

\subsection{Data-Driven Newell's Car-Following Model with Bayesian Linear Regression}

\begin{figure}[!t]
\begin{subfigure}{.245\textwidth}
\includegraphics[scale=0.38]{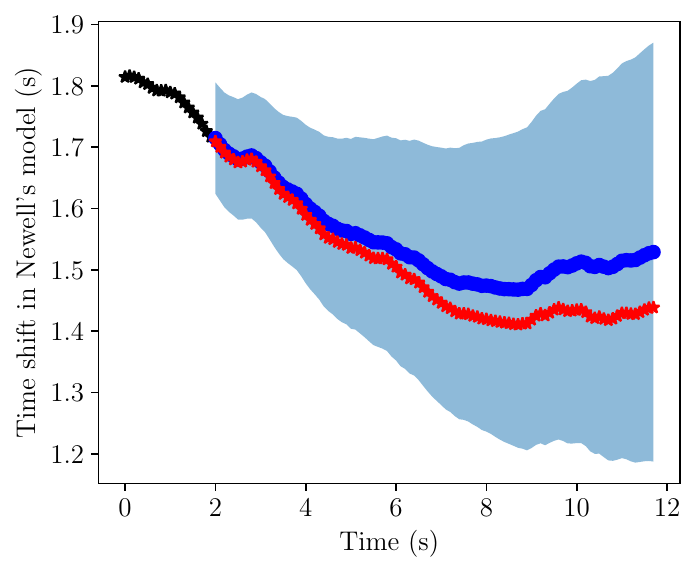}
\caption{}
\label{fig:blr_ex-a}
\end{subfigure}
\hspace{-8pt}
\begin{subfigure}{.245\textwidth}
\includegraphics[scale=0.38]{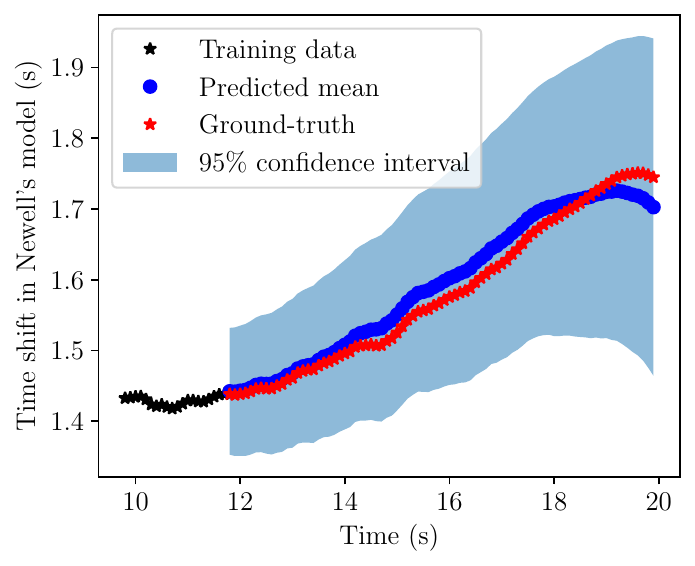}
\caption{}
\label{fig:blr_ex-b}
\end{subfigure}
\caption{The time shift prediction from BLR used to calibrate Newell's car-following model for a human-driven vehicle in Lyft level-5 open dataset \cite{li2023large}. 
The first model (a) is trained using the initial $20$ data points, and the second model (b) is trained with more recent data if the first model yields high predicted variances.} 
\label{fig:blr_ex}
\end{figure}

Newell's car-following model \cite{newell_simplified_2002} considers that the position of each vehicle is shifted in time and space from its preceding vehicle's trajectory due to the effect of traffic wave propagation.
Specifically, the position of each \HDV{k}, ${k \in \HHH (t)}$, is predicted from the position of its preceding vehicle $j$ as follows
\begin{equation}\label{eq:newell}
p_{k}(t) = p_{j} \big( t - \tau_{k} \big) - w\, \tau_{k},
\end{equation}
where ${\tau_{k} \in \mathbb{R}_{\ge0}}$ is the time shift of \HDV{k}, and ${w \in \mathbb{R}_{\ge0}}$ is the speed of the backward propagating congestion waves, which is considered to be a constant \cite{wong2021traffic,molnar_board_2022}.
The time shift ${\tau_{k}}$ is considered as a stochastic variable and can be learned by BLR.
Since ${v_{j} (t) > 0}$ (Assumption~\ref{assp:pos-speed}) and ${w > 0}$, ${p_{j} \left( t - \tau_{k} \right) - w \tau_{k}}$ is a strictly decreasing function of $\tau_{k}$. 
Thus, there exists a unique value of $\tau_{k}$ such that \eqref{eq:newell} is satisfied for any $t$.
In this paper, rather than only using the data at a specific time instant, we use the observations over a finite estimation horizon of length $H \in \ZZplus$ to estimate the distribution of $\tau_k$ for each \HDV{k} by a BLR model as follows
\begin{equation}
\label{eq:blrmodel}
\tau_k \sim \BLR_k (\bbsym{x}_k; \bbsym{\theta}_k),
\end{equation}
where $\BLR_k$ denotes the BLR model, $\bbsym{x}_k = [1, p_k, p_j]^\top \in \RR^3$ is the vector of inputs, and $\bbsym{\theta}_k \in \RR^3$ is the vector of weights.
Henceforth, for ease of notation, we use $\BBB_k (p_k, p_j)$ to denote the BLR model for $\tau_k$ given a preceding vehicle $j$.

To demonstrate the model's capability to accurately learn realistic human driving behavior, we utilized the trajectory data for a specific human driver in Lyft level-5 open dataset \cite{li2023large} whose actual time shift varies between \SI{1.4}{s} and \SI{1.8}{s}.
The predicted time shift with $95\%$ confidence interval using BLR is shown in Fig.~\ref{fig:blr_ex}.
We utilized $N = 20$ initial data points to train a BLR model (Fig.~\ref{fig:blr_ex-a}) and retrain the model (Fig.~\ref{fig:blr_ex-b}) with more recent data if either the BLR prediction uncertainty is too high or the actual observations are outside the $95\%$ confidence interval.

Note that to capture the lateral interaction of each HDV with vehicles on the neighboring road in the merging scenario, we consider the virtual projection of vehicles traveling on that road.
The virtual projection is implemented in a proximity area before the merging point, defined as the virtual projection zone in Fig.~\ref{fig:scenario}.
The virtual projection is illustrated by an example shown in Fig.~\ref{fig:projection}. 
We consider that from the perspective of \HDV{3}, the projected \CAV{2} is the preceding vehicle instead of \CAV{1}.
Similar generalized car-following models for capturing the merging behavior of human drivers have been presented in \cite{holley2023mr,kreutz2021analysis,guo2019improved}.

\begin{figure}[!t]
 \begin{center}
 \includegraphics[scale=0.48, bb = 250 170 750 455, clip=true]{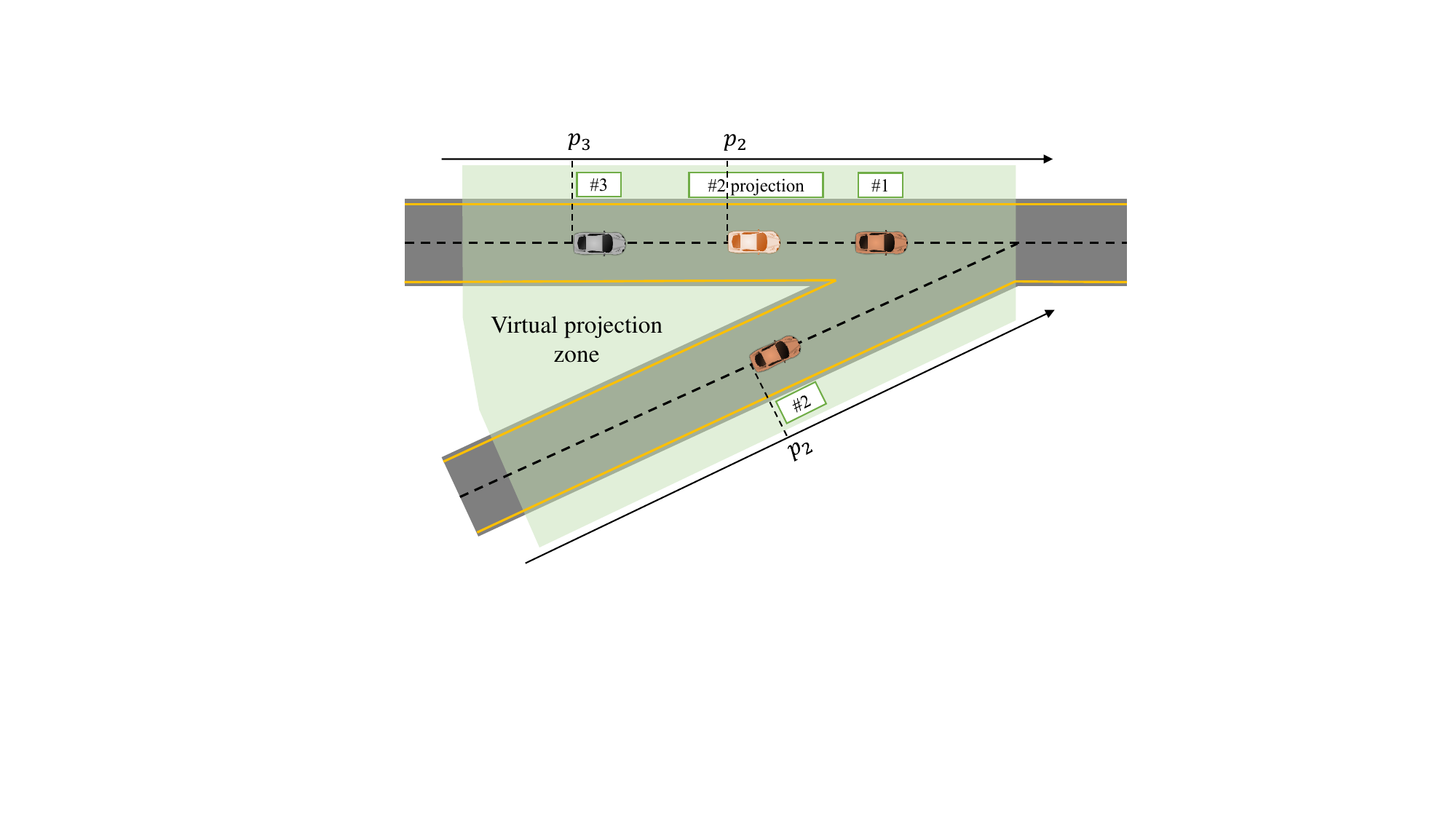}
 \end{center}
\caption{An example of virtual projection in the virtual projection zone, where the \CAV{2} is projected from the perspective of \HDV{3}. } 
\label{fig:projection}
\end{figure} 

\subsection{Exception Handling}

Next, we present a method to handle the case when an HDV, \eg \HDV{k}, is not preceded by any vehicles in the control zone, including those determined by virtual projection.
Generally, it is reasonable to assume that \HDV{k} remains its current speed in this case. 
However, to further quantify the uncertainty in human driving behavior by exploiting the data-driven Newell's car-following model, we consider that \HDV{k} follows \emph{a virtual preceding vehicle} with a constant speed trajectory. %
Let $k'$ denote the virtual preceding vehicle to \HDV{k}. 
The constant speed trajectory of the virtual preceding vehicle is given by
\begin{align}
p_{k'} (t) &= \phi_{k',1}\, t + \phi_{k',0}, \\
v_{k'} (t) &= \phi_{k',1}, 
\end{align}
where $\phi_{k',1} \in \mathbb{R}_{\ge0}$ and $\phi_{k',0} \in \RR$ are constants, in which $\bar{\phi}_{k',1}$ is computed based on the average speed of \HDV{k} over the estimation horizon, while $\phi_{k',0}$ is chosen such that $p_{k'} (t_k^0 - \bar{\tau}) = p^0$ with $\bar{\tau} \in \mathbb{R}_{\ge0}$ is an arbitrarily predefined constant.
We consider that the actual position trajectory of \HDV{k} is computed by Newell's car-following model given the virtual preceding vehicle $k'$ as follows
\begin{equation}
\label{eq:const-speed-stochastic}
p_k (t) = p_{k'} (t - \tau_k) - w \tau_k = \phi_{k',1}\, (t - \tau_k) + \phi_{k',0} - w \tau_k, 
\end{equation}
where we quantify $\tau_k$ with a BLR model $\tau_k \sim \BLR (p_k, p_{k'})$, which is similar to \eqref{eq:blrmodel}. 

\section{Stochastic Planning with probabilistic constraints}
\label{sec:stochastic}

In this section, we develop a stochastic trajectory planning framework using the data-driven Newell's following model for learning human driving behavior presented in the last section.
The use of stochastic control can reduce the conservatism of classical robust control for uncertain systems by formulating robust constraints as probabilistic constraints \cite{brudigam2021stochastic}.
As a result, probabilistic constraints have been used recently in robust trajectory optimization algorithms, \eg \cite{lew2020chance,coulson2021distributionally,hewing2019cautious}. 

\subsection{Uncertainty Quantification}
\label{sub:uncertainty}

\begin{remark}
In our framework, we consider that the trajectories of CAVs are deterministic, or equivalently, stochastic variables with zero variances. 
\end{remark}

Note that given the data-driven Newell's car-following model using BLR, the time shift $\tau_k$ of \HDV{k} at any future time $t$ must satisfy the following equation
\begin{equation}
\label{eq:tau-dist}
\tau_k (t) \sim \BLR \Big( p_j \big( t - \tau_k (t) \big) - w \tau_k (t),\, p_j(t) \Big),
\end{equation}
where $j$ is the index of the preceding vehicle.
Solving \eqref{eq:tau-dist} to obtain a closed-form solution for $\tau_k (t)$ and $p_k (t)$ at any future time $t$ is computationally intractable. 
As a result, in what follows, we propose a method to simplify the predictions of trajectory and merging time for each \HDV{k} along with quantifying the uncertainty of the predictions.

When \HDV{k} enters the control zone at $t_k^0$, we train the first BLR model $\BLR_k$ for $\tau_k$ using a dataset of $H\in\mathbb{N}$ data points collected in the buffer zone.
Let $\tau_{k} (t_k^0) \sim \NNN (\mu_{\tau_{k}}, \sigma^2_{\tau_{k}})$ be the prediction of $\tau_{k}$ with the mean $\mu_{\tau_{k}}$ and the variance $\sigma^2_{\tau_{k}}$.
We utilize $\tau_{k} (t_k^0)$ to construct a nominal predicted trajectory and merging time for \HDV{k}.
In our analysis, we consider the \emph{zero-variance method} for approximating uncertainty propagation while making BLR prediction, which implies that if the inputs of a BLR model include a stochastic variable, we only use its mean to compute the mean and variance of the model output without taking its variance into account. The zero-variance method has been considered in multiple studies on using stochastic processes in control, \eg \cite{nghiem2017data,boedecker2014approximate}.

\begin{assumption}
\label{assp:UP}
The effect of uncertainty propagation is approximated by the zero-variance method.
\end{assumption}
Assumption~\ref{assp:UP} implies that the trajectory prediction of any HDV only depends on the uncertainty resulting from the time shift prediction of Newell's car-following model, and does not depend on the uncertainty in trajectory prediction of its preceding vehicles.
The reason for ignoring full uncertainty propagation is that it may lead to overly conservative constraints if the CAV penetration rate is low.
Moreover, Assumption~\ref{assp:UP} aims to simplify the computation since the distribution of vehicles' trajectories, which are inputs of the BLR model, is generally not Gaussian as we show later (\cf Lemma~\ref{lem:1}). 

Next, we show that the predicted position mean for any HDVs using the data-driven Newell's car-following model is either a cubic polynomial or an affine polynomial.

\begin{lemma}
\label{lem:1}
Given Assumption~\ref{assp:UP}, at any time $t$, if the distribution $\NNN(\mu_{\tau_k} (t) , \sigma^2_{\tau_k} (t))$ for $\tau_k (t)$ is known, 
and \HDV{k} is preceded by a vehicle $j$ whose predicted position mean is a cubic polynomial of time parameterized by ${\bbsym{\phi}_j = [\phi_{j,3}, \phi_{j,2}, \phi_{j,1}, \phi_{j,0}]^\top}$,
then the predicted position mean and variance of \HDV{k} at time $t$ can be computed as given by \eqref{eq:mu-f} and \eqref{eq:sigma-f}, where $\lambda_k = t- \mu_{\tau_k}$.
\begin{figure*}
\begin{align}
& \mu_{p_k}(t) = \phi_{j,3} (\lambda_k^3 + 3\lambda_k \sigma_{\tau_k}^2) + \phi_{j,2} (\lambda_k^2 + \sigma_{\tau_k}^2) + (\phi_{j,1}+w) \lambda_k + (\phi_{j,0} - wt)
\label{eq:mu-f} \\
&
\begin{multlined}
\sigma^2_{p_k}(t) = \sigma_{\tau_k}^2 
\Big( (\phi_{j,1}+w)^2 
+ 4 (\phi_{j,1}+w) \phi_{j,2} \lambda_k 
+ 6 (\phi_{j,1}+w) \phi_{j,3} \lambda_k^2 
+ 6 (\phi_{j,1}+w) \phi_{j,3} \sigma_{\tau_k}^2 
+ 4 \phi_{j,2}^2 \lambda_k^2 
+ 2 \phi_{j,2}^2 \sigma_{\tau_k}^2 \\
+ 12 \phi_{j,2} \phi_{j,3} \lambda_k^3 
+ 24 \phi_{j,2} \phi_{j,3} \lambda_k \sigma_{\tau_k}^2 
+ 36 \phi_{j,3}^2 \lambda_k^2 \sigma_{\tau_k}^2 
+ 15 \phi_{j,3}^2 \sigma_{\tau_k}^4
+ 9 \phi_{j,3}^2 \lambda_k^4
\Big).
\end{multlined} 
\label{eq:sigma-f}
\end{align}
\hrulefill
\end{figure*}
Moreover, the coefficients of the polynomial are computed as follows
\begin{equation}
\label{eq:hdv-poly}
\begin{split}
&\phi_{k,3} = \phi_{j,3}, \\
&\phi_{k,2} = \phi_{j,2} - 3\, \phi_{j,3}\, \mu_{\tau_k}, \\
&\phi_{k,1} = \phi_{j,1} - 2\, \phi_{j,2}\, \mu_{\tau_k} + 3\, \phi_{j,3}\, (\mu_{\tau_k}^2 + \sigma_{\tau_k}^2), \\
&\begin{multlined}
\phi_{k,0} = \phi_{j,0} - (\phi_{j,1} + w) \mu_{\tau_k} + \phi_{j,2} (\mu_{\tau_k}^2 + \sigma_{\tau_k}^2) \\
- \phi_{j,3} \mu_{\tau_k} (\mu_{\tau_k}^2 + 3\sigma_{\tau_k}^2).
\end{multlined}
\end{split} 
\end{equation}

\end{lemma}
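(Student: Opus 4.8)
The plan is to treat the time shift $\tau_k$ as the only source of randomness, exploiting the fact that, under Assumption~\ref{assp:UP}, the preceding vehicle's trajectory enters \eqref{eq:newell} only through its deterministic mean. First I would substitute the cubic polynomial $p_j(s) = \phi_{j,3}s^3 + \phi_{j,2}s^2 + \phi_{j,1}s + \phi_{j,0}$ into Newell's model \eqref{eq:newell}, so that $p_k(t) = p_j(t - \tau_k) - w\tau_k$ becomes an explicit cubic in the shifted argument $t - \tau_k$ together with the affine term $-w\tau_k$. Writing $\tau_k = \mu_{\tau_k} + \varepsilon$ with $\varepsilon \sim \NNN(0, \sigma^2_{\tau_k})$ and recalling $\lambda_k = t - \mu_{\tau_k}$, I would re-expand $p_k(t)$ as a cubic polynomial in the centered, zero-mean Gaussian perturbation $\varepsilon$, i.e.\ $p_k(t) = a_0 + a_1\varepsilon + a_2\varepsilon^2 + a_3\varepsilon^3$, whose coefficients $a_0,\dots,a_3$ are polynomials in $\lambda_k$ determined by $\bbsym{\phi}_j$ and $w$ (in particular $a_3 = -\phi_{j,3}$ and $a_2 = 3\phi_{j,3}\lambda_k + \phi_{j,2}$).

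Both moments then follow from the standard central moments of a scalar Gaussian, namely $\mathbb{E}[\varepsilon] = \mathbb{E}[\varepsilon^3] = \mathbb{E}[\varepsilon^5] = 0$, $\mathbb{E}[\varepsilon^2] = \sigma^2_{\tau_k}$, $\mathbb{E}[\varepsilon^4] = 3\sigma^4_{\tau_k}$, and $\mathbb{E}[\varepsilon^6] = 15\sigma^6_{\tau_k}$. Taking the expectation of the cubic expansion annihilates the odd-order terms and leaves $\mu_{p_k}(t) = a_0 + a_2\sigma^2_{\tau_k}$; collecting the $\phi_{j,\cdot}$ factors reproduces \eqref{eq:mu-f}. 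For the variance I would compute the second moment $\mathbb{E}[p_k(t)^2]$ by squaring the cubic in $\varepsilon$ and taking expectations term by term, then subtract $\mu_{p_k}(t)^2$; the $a_0^2$ and $2a_0a_2\sigma^2_{\tau_k}$ contributions cancel, and what survives is $\sigma^2_{\tau_k}a_1^2 + 2\sigma^4_{\tau_k}a_2^2 + 6\sigma^4_{\tau_k}a_1a_3 + 15\sigma^6_{\tau_k}a_3^2$, which upon substituting the coefficients and grouping by powers of $\lambda_k$ and $\sigma^2_{\tau_k}$ yields \eqref{eq:sigma-f}.

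Finally, to obtain the coefficients in \eqref{eq:hdv-poly} I would substitute $\lambda_k = t - \mu_{\tau_k}$ back into $\mu_{p_k}(t)$ and re-expand in powers of $t$, reading off $\phi_{k,3},\dots,\phi_{k,0}$ as the coefficients of $t^3,\dots,t^0$; the $t^3$ coefficient is plainly $\phi_{j,3}$, while the lower-order coefficients pick up the binomial corrections from $(t - \mu_{\tau_k})^n$ together with the $\sigma^2_{\tau_k}$ contribution carried by $a_2$. The affine (constant-speed) case mentioned before the lemma is recovered by setting $\phi_{j,3} = \phi_{j,2} = 0$, which makes $p_k$ affine in $\varepsilon$ and collapses the higher moments. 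The one genuinely laborious step is the variance computation: the squared cubic contains terms up to $\varepsilon^6$, and correctly accounting for the $\mathbb{E}[\varepsilon^4]$ and $\mathbb{E}[\varepsilon^6]$ contributions — in particular verifying that the $3a_2^2\sigma^4_{\tau_k}$ term in the second moment is partially cancelled by $\mu_{p_k}^2$ so that only $2a_2^2\sigma^4_{\tau_k}$ remains — is where an algebraic slip is most likely. Everything else is routine polynomial bookkeeping, justified throughout by the zero-variance treatment of $p_j$ in Assumption~\ref{assp:UP}.
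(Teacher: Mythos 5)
Your proposal is correct and follows essentially the same route as the paper's proof: substitute the cubic into Newell's model, treat the Gaussian time shift as the sole source of randomness, compute the mean and second moment using Gaussian moments up to sixth order, obtain the variance as $\EE[p_k(t)^2]-\EE[p_k(t)]^2$, and read off \eqref{eq:hdv-poly} by equating polynomial coefficients. The only difference is bookkeeping: the paper works with raw moments of the non-centered variable $t-\tau_k$ via its moment-generating function, whereas you center on $\varepsilon=\tau_k-\mu_{\tau_k}$ so odd moments vanish, which yields the same formulas \eqref{eq:mu-f} and \eqref{eq:sigma-f} with slightly less algebra (and your intermediate identities, e.g.\ the compact variance $a_1^2\sigma^2_{\tau_k}+2a_2^2\sigma^4_{\tau_k}+6a_1a_3\sigma^4_{\tau_k}+15a_3^2\sigma^6_{\tau_k}$ and the partial cancellation of the $3a_2^2\sigma^4_{\tau_k}$ term, all check out).
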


\begin{proof}

The proof is given in Appendix~\ref{sec:lem1}.
Note that the distribution of $p_k(t)$ in this case is not Gaussian. 
\end{proof}

\begin{lemma}
\label{lem:2}
Given Assumption~\ref{assp:UP}, if the distribution $\NNN(\mu_{\tau_k} , \sigma^2_{\tau_k})$ for $\tau_k (t)$ at time $t$ is known, 
and \HDV{k} is preceded by a vehicle $j$ whose predicted position mean is an affine polynomial of time parameterized by ${\bbsym{\phi}_j = [\phi_{j,1}, \phi_{j,0}]^\top}$, i.e., it cruises with constant speed, 
then the predicted position mean and variance of \HDV{k} at time $t$ can be computed as given by
\begin{align}
\mu_{p_k} (t) &= (\phi_{j,1}+w)\, \lambda_k + (\phi_{j,0} - w\, t), \\
\sigma^2_{p_k} (t) &= (\phi_{j,1}+w)^2 \sigma^2_{\tau_k},
\end{align}
where $\lambda_k = t - \mu_{\tau_k}$ and the coefficients of the polynomial are computed as follows
\begin{equation}
\label{eq:hdv-poly-1}
\begin{split}
\phi_{k,1} &= \phi_{j,1}, \\
\phi_{k,0} &= \phi_{j,0} - (\phi_{j,1} + w) \mu_{\tau_k}.
\end{split}
\end{equation}
\end{lemma}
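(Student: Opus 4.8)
The plan is to prove Lemma~\ref{lem:2} by direct substitution, exploiting the fact that when the preceding vehicle $j$ cruises at constant speed its position is affine in time, so that the position of \HDV{k} becomes an \emph{affine} function of the single random variable $\tau_k$. This is substantially simpler than Lemma~\ref{lem:1}: one could in fact recover the claimed expressions by setting $\phi_{j,3} = \phi_{j,2} = 0$ in \eqref{eq:mu-f} and \eqref{eq:sigma-f}, but a self-contained computation is cleaner and more transparent.

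First I would substitute the affine trajectory $p_j(s) = \phi_{j,1}\, s + \phi_{j,0}$ into Newell's car-following model \eqref{eq:newell}, evaluated at the time-shifted argument $s = t - \tau_k$, to obtain
\[
p_k(t) = \phi_{j,1}(t - \tau_k) + \phi_{j,0} - w\tau_k = \phi_{j,1}\, t + \phi_{j,0} - (\phi_{j,1} + w)\,\tau_k .
\]
Because $p_j$ is deterministic in this case, Assumption~\ref{assp:UP} (the zero-variance method) is automatically in force and the only source of randomness is $\tau_k \sim \NNN(\mu_{\tau_k}, \sigma^2_{\tau_k})$; hence $p_k(t)$ is an affine image of a Gaussian.

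Next I would read off the mean and variance using the standard rules for an affine transformation of a Gaussian. Taking expectations yields $\mu_{p_k}(t) = \phi_{j,1}\, t + \phi_{j,0} - (\phi_{j,1}+w)\,\mu_{\tau_k}$, which I would then regroup in terms of $\lambda_k = t - \mu_{\tau_k}$ to match the stated form $(\phi_{j,1}+w)\,\lambda_k + (\phi_{j,0} - w\, t)$. The variance follows immediately: since only the term $-(\phi_{j,1}+w)\,\tau_k$ is random, $\sigma^2_{p_k}(t) = (\phi_{j,1}+w)^2\, \sigma^2_{\tau_k}$. Finally, viewing $\mu_{p_k}(t)$ as an affine polynomial $\phi_{k,1}\, t + \phi_{k,0}$ in $t$ and equating coefficients gives $\phi_{k,1} = \phi_{j,1}$ and $\phi_{k,0} = \phi_{j,0} - (\phi_{j,1}+w)\,\mu_{\tau_k}$, which is exactly \eqref{eq:hdv-poly-1}.

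There is no genuine obstacle here; the result is essentially a one-line linear-Gaussian computation. The only points requiring a little care are the bookkeeping that converts the expression in $t$ into the $\lambda_k$-form of the statement, and the observation that---unlike in Lemma~\ref{lem:1}, where the cubic dependence forces higher moments of $\tau_k$ into play and destroys normality---the constant-speed case leaves $p_k(t)$ an affine transform of a Gaussian, and therefore itself Gaussian. Thus the mean and variance completely characterize the predicted distribution in this setting.
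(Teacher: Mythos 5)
Your proof is correct, but it takes a different route from the paper: the paper's entire proof of Lemma~\ref{lem:2} is the one-line observation that it is the special case of Lemma~\ref{lem:1} with $\phi_{j,3} = \phi_{j,2} = 0$ (under which \eqref{eq:mu-f} collapses to $(\phi_{j,1}+w)\lambda_k + (\phi_{j,0}-wt)$, \eqref{eq:sigma-f} collapses to $(\phi_{j,1}+w)^2\sigma^2_{\tau_k}$, and \eqref{eq:hdv-poly} collapses to \eqref{eq:hdv-poly-1}), which is exactly the alternative you mention in passing but elect not to follow. Your self-contained computation---substituting the affine $p_j$ into Newell's model to get $p_k(t) = \phi_{j,1}t + \phi_{j,0} - (\phi_{j,1}+w)\tau_k$ and reading off mean and variance of an affine transform of a Gaussian---is equally valid and arguably more informative: it makes explicit that no higher moments of $\tau_k$ enter, and it yields the stronger conclusion that $p_k(t)$ is itself \emph{Gaussian} in the constant-speed case, a point the paper's specialization leaves implicit (and which contrasts with the cubic case, where the paper's proof of Lemma~\ref{lem:1} explicitly notes that $p_k(t)$ is not Gaussian). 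What the paper's route buys is brevity and the avoidance of any duplicated algebra; what your route buys is independence from the considerably heavier moment computations of Lemma~\ref{lem:1} and a sharper distributional statement. Your bookkeeping converting $\phi_{j,1}t + \phi_{j,0} - (\phi_{j,1}+w)\mu_{\tau_k}$ into the $\lambda_k$-form and the coefficient identification in \eqref{eq:hdv-poly-1} both check out.
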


\begin{proof}
This is a trivial case of Lemma~\ref{lem:1} with ${\phi_{j,3} = \phi_{j,2} = 0}$.
\end{proof}

\begin{theorem}
The mean prediction for the position of any \HDV{k} is either a cubic polynomial or an affine polynomial of time.
\end{theorem}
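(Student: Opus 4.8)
The plan is to argue by induction along the car-following chains induced by the preceding-vehicle relation, using Lemmas~\ref{lem:1} and~\ref{lem:2} as the inductive step. First I would note that the immediate predecessor of any \HDV{k} is one of three kinds: a CAV, another HDV (possibly the virtually projected image of a vehicle on the neighboring road), or---when no real predecessor exists---the virtual preceding vehicle of the exception-handling construction \eqref{eq:const-speed-stochastic}. Since vehicles are indexed by order of entry into the control zone (Definition~\ref{def:set_vehicle}) and each HDV follows a vehicle ahead of it, these chains are finite and terminate at a well-defined leader, which is either a genuine CAV or a virtual leader.

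For the base cases I would use the two sources of seed trajectories. A CAV either follows the energy-optimal motion primitive \eqref{eq:optimalTrajectory}, whose position is a cubic polynomial in time, or cruises at constant speed, yielding an affine polynomial; the virtually projected image of such a CAV inherits the same position--time law and is therefore again cubic or affine. A virtual preceding vehicle is defined to travel at constant speed and is hence affine. Thus every chain leader has mean position that is either cubic or affine, which anchors the induction.

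For the inductive step I would invoke the two lemmas directly. If the predecessor $j$ of \HDV{k} has cubic mean position, Lemma~\ref{lem:1} gives that the mean position of \HDV{k} is again cubic, with coefficients \eqref{eq:hdv-poly}; if $j$ is affine, Lemma~\ref{lem:2} gives that \HDV{k} is affine, with coefficients \eqref{eq:hdv-poly-1}. Hence the class $\{\text{cubic},\,\text{affine}\}$ is closed under one application of the Newell map, and propagating this from the leader down to \HDV{k} establishes the claim for that vehicle. Since \HDV{k} was arbitrary, the result holds for every HDV.

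The main obstacle I anticipate is not the algebra---it is already encapsulated in Lemmas~\ref{lem:1} and~\ref{lem:2}---but rather making the induction well-founded: I must verify that the preceding-vehicle relation, including virtual projection and the virtual-preceding-vehicle exception, contains no cycles and always bottoms out at a CAV or a virtual leader. This reduces to checking that virtual projection merely reassigns which vehicle plays the role of predecessor without introducing new dependencies, and that the exception case supplies an explicit affine leader whenever a real predecessor is absent; with these observations the chain structure is acyclic and the entry-order indexing guarantees termination.
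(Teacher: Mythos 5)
Your proposal is correct and follows essentially the same route as the paper: the paper's proof also uses Lemmas~\ref{lem:1} and~\ref{lem:2} as the closure step and reduces to the two base cases of a chain leader that is a CAV (cubic, via \eqref{eq:optimalTrajectory}) or a virtual constant-speed leader (affine, via \eqref{eq:const-speed-stochastic}). Your explicit attention to the well-foundedness of the preceding-vehicle chains is a careful elaboration of what the paper leaves implicit, but it does not constitute a different argument.
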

\begin{proof}
Given Lemmas~\ref{lem:1} and \ref{lem:2}, if \HDV{k} is preceded by an HDV, \eg \HDV{j}, and the mean prediction for the position of \HDV{j} is either a cubic polynomial or an affine polynomial of time, then that of \HDV{k} is also either a cubic polynomial or an affine polynomial of time.
Therefore, we only need to consider the cases where (1) \HDV{k} is preceded by a CAV, \eg \CAV{i}, or (2) \HDV{k} is not preceded by any vehicle inside the control zone.
\begin{itemize}[leftmargin=*]
\item \textit{Case 1:} If \HDV{k} is preceded by \CAV{i}, since the position trajectory of \CAV{i} is a cubic polynomial, from Lemma~\ref{lem:1} we can verify that the predicted position mean of \HDV{k} is a cubic polynomial of time.
\item \textit{Case 2:} If \HDV{k} is not preceded by any vehicle inside the control zone and has not crossed the merging point, from \eqref{eq:const-speed-stochastic}, the predicted mean of ${p_k} (t)$ is
\begin{equation}
\mu_{p_k} (t) = \bar{\phi}_{j,1}\, t + (\phi_{j,0} - \bar{\phi}_{j,1}\, \mu_{\tau_k} - w \, \mu_{\tau_k}),
\end{equation}
\end{itemize}

which is a linear function of time.
\end{proof}

\begin{lemma}
\label{lem:merging_time}
Suppose \HDV{k} has not crossed the merging point. Then, the merging time of \HDV{k} is computed by
\begin{equation}
\label{eq:t_f_blr}
t_k^{\rm{m}} = t_j (w \tau_k) + \tau_{k} ,
\end{equation}
where $t_j (w \tau_k)$ denotes the time that the preceding vehicle $j$ reaches the position $w \tau_k$, where $\tau_k = \tau_k (t_k^0) \sim \NNN(\mu_{\tau_k}, \sigma^2_{\tau_k})$.
\end{lemma}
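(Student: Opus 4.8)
The plan is to read the formula off Newell's car-following model \eqref{eq:newell} once the defining property of the merging point is imposed. Recall that we set $p^{\rm m} = 0$, so by definition the merging time $t_k^{\rm m}$ of \HDV{k} is the instant at which its predicted position reaches the origin, $p_k(t_k^{\rm m}) = 0$. Because \HDV{k} has not yet crossed the merging point, the preceding vehicle $j$ identified for it (possibly through the virtual projection of Section~\ref{sec:human}) still governs its motion through \eqref{eq:newell} on the interval leading up to $t_k^{\rm m}$, using the nominal constant time shift $\tau_k = \tau_k(t_k^0) \sim \NNN(\mu_{\tau_k}, \sigma^2_{\tau_k})$ adopted for the predicted trajectory.

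First I would substitute $t = t_k^{\rm m}$ into \eqref{eq:newell} and invoke $p_k(t_k^{\rm m}) = 0$, which gives $p_j(t_k^{\rm m} - \tau_k) - w\,\tau_k = 0$, i.e., $p_j(t_k^{\rm m} - \tau_k) = w\,\tau_k$. The next step is to invert this relation in its time argument. Here the positivity of speeds is what makes the inversion legitimate: since $v_j(t) > 0$ for all $t$ --- by Assumption~\ref{assp:pos-speed} when $j$ is an HDV, and by the speed constraint \eqref{eq:vconstraint} with $v_{\min} > 0$ when $j$ is a CAV --- the map $t \mapsto p_j(t)$ is strictly increasing and therefore invertible. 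Denoting its inverse by $t_j(\cdot)$, which for a CAV is precisely the Cardano expression \eqref{eq:cardano}, I would apply $t_j$ to both sides to obtain $t_k^{\rm m} - \tau_k = t_j(w\,\tau_k)$, and a rearrangement yields the claimed identity $t_k^{\rm m} = t_j(w\,\tau_k) + \tau_k$. Since $\tau_k$ is Gaussian, this exhibits $t_k^{\rm m}$ as a generally nonlinear transformation of the time-shift distribution.

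The algebra is elementary, so I expect the only delicate points to be structural well-posedness rather than computation. I would check that $t_j(w\,\tau_k)$ is meaningful, i.e., that the target position $w\,\tau_k$ --- which is positive, hence downstream of the merging point, as $w > 0$ and $\tau_k > 0$ --- lies within the range of positions vehicle $j$ actually traverses on the relevant interval; this is exactly where the hypothesis that \HDV{k} has not crossed the merging point is used, ensuring that the fixed preceding vehicle $j$ is still the one prescribed by \eqref{eq:newell} at $t_k^{\rm m}$. I would also note that, under the zero-variance method (Assumption~\ref{assp:UP}), the trajectory of $j$ entering through $t_j(\cdot)$ is taken as its nominal mean trajectory, so that all randomness in $t_k^{\rm m}$ propagates solely through $\tau_k$.
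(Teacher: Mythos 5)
Your proof is correct and takes essentially the same route as the paper's: evaluate Newell's model \eqref{eq:newell} at $t_k^{\rm m}$, impose $p_k(t_k^{\rm m}) = p^{\rm m} = 0$ to get $p_j(t_k^{\rm m} - \tau_k) = w\,\tau_k$, and invert the strictly increasing position trajectory of the preceding vehicle to conclude $t_k^{\rm m} - \tau_k = t_j(w\,\tau_k)$. If anything, your version is slightly more careful: the inversion requires positivity of the \emph{preceding} vehicle's speed $v_j$ (covered by Assumption~\ref{assp:pos-speed} or by \eqref{eq:vconstraint} with $v_{\min}>0$, depending on whether $j$ is an HDV or a CAV), which you state explicitly, whereas the paper's text attributes the positivity to vehicle $k$.
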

\begin{proof}
Evaluating Newell's car following model \eqref{eq:newell} at $t_k^{\rm m}$ we have 
\begin{equation}
 p_k (t_k^{\rm m}) = p_j(t_k^{\rm m}-\tau_k)-w \tau_k. 
\end{equation}
At the merging time, we have $p_k(t_k^{\rm m}) = p^{\rm m} = 0$ which results in 
\begin{equation}
 p_j(t_k^m-\tau_k)=w \tau_k. 
\end{equation}
As the speed of vehicle $k$ is always positive given \eqref{eq:vconstraint} and Assumption~\ref{assp:pos-speed}, we can compute the inverse as 
 \begin{equation}
 t_k^m-\tau_k=t_j(w \tau_k), 
\end{equation}
and the proof is complete. 
\end{proof}

If vehicle $j$ is an HDV, we approximate $t_j (w \tau_k)$ by solving $\mu_{p_j} = w \tau_k$.
If $\mu_{p_j} (t)$ is an affine polynomial of time parameterized by ${\bbsym{\phi}_j = [\phi_{j,1}, \phi_{j,0}]^\top}$, the time trajectory as a function of position is given by
\begin{equation}
t_j (p_j) = \frac{-\phi_{j,0}}{\phi_{j,1}}\, p_j,
\end{equation}
while if $\mu_{p_j} (t)$ is a cubic polynomial, the time trajectory follows Cardano formulation \eqref{eq:cardano}.
Given Lemma~\ref{lem:merging_time}, $t_k^{\rm{m}}$ is a stochastic variable with Gaussian distribution, ${t_k^{\rm{m}} \sim \NNN(\mu_{t_k^{\rm m}}, \sigma^2_{t_k^{\rm m}})}$ with $\mu_{t_k^{\rm m}} = \mu_{\tau_k} + t_j (w \tau_k)$ and ${\sigma^2_{t_k^{\rm m}} = \sigma^2_{\tau_k}}$. 
To guarantee that the computation of $t_j (w \tau_k)$ using the polynomial trajectories is valid, the position $w \tau_k$ must be inside the control zone. Thus, we impose the following assumption.
\begin{assumption}
\label{assp:wave}
The speed of the backward propagating congestion waves $w$ is chosen such that $w\, \tau_k \le p^{\rm f}$. 
\end{assumption}

Assumption~\ref{assp:wave} can be satisfied in practice since the term $w\, \tau_k$ describes the standstill spacing between vehicles and should be relatively small compared to the length from the merging point to the control zone exit.

\subsection{Stochastic Time-Optimal Control Problem with probabilistic constraints}

Since the predicted trajectory and merging time for any \HDV{k} are stochastic variables, next we formulate probabilistic constraints for rear-end and lateral safety that guarantee constraint satisfaction at a certain probability.
Let $\xi \in (0, 1)$ be the probability of constraint satisfaction.
The lateral probabilistic constraint for \CAV{i} and \HDV{k} entering from different roads is given by
\begin{equation}
\label{eq:cc-lateral}
\begin{split}
&\probP \big[t_i^{\rm{f}} - t_k^{\rm{f}} \ge \delta_l \big] \ge \xi, \\
\text{OR}\quad &\probP \big[t_i^{\rm{f}} - t_k^{\rm{f}} \le -\delta_l \big] \ge \xi.
\end{split}
\end{equation}
The deterministic rear-end constraints for \CAV{i} and its immediate preceding \HDV{k} in \eqref{eq:rearend_constraint} and \eqref{eq:rearend_constraint-2} are considered as the following probabilistic constraints
\begin{equation}
\label{eq:cc-rearend-1}
\probP \big[ p_i(t) - p_k(t - \delta_r) \le - d_{\min} \big] \ge \xi, \; \forall t \in [t_i^{0}, t_k^{\rm f}] 
\end{equation}
for ${k = \max \, \{ j \in \RRR_{i,\mathrm{S}}(t) \; | \; j < i \}}$, and
\begin{equation}
\label{eq:cc-rearend-2}
\probP \big[ p_i(t) - p_k(t - \delta_r) \le - d_{\min} \big] \ge \xi, \; \forall t \in [t_i^{\rm m}, t_k^{\rm f}],
\end{equation}
for ${k = \max \, \{ j \in \LLL(t) \; | \; t_j^{\rm m} < t_i^{\rm m} \}}$.

Therefore, we formulate the following stochastic time-optimal control problem for planning at the control zone entry.

\begin{problem} 
(\textbf{Stochastic planning at the control zone entry})
At the time $t_i^{0}$ of entering the control zone, \CAV{i} solves the following time-optimal control problem
\label{prb:cc-optimal_MZ}
\begin{align}
\label{eq:cc-optimal_MZ}
\begin{split}
 &\minimize_{t_i^{\rm f} \in \mathcal{T}_i(t_i^0)} \quad t_i^{\rm f} 
 \\
 & \subjectto 
 \\
 & \quad \eqref{eq:uconstraint}, \eqref{eq:vconstraint}, \eqref{eq:optimalTrajectory},
 \\
 & \quad \eqref{eq:cc-lateral},\; \forall \, k \in \RRR_{i,\mathrm{N}}(t_i^0), 
 \\
 & \quad \eqref{eq:cc-rearend-1},\; k = \max \, \{ j \in \RRR_{i,\mathrm{S}}(t_i^0) \; | \; j < i \}, 
 \\
 & \quad \eqref{eq:cc-rearend-2},\; k = \max \, \{ j \in \RRR_{i,\mathrm{N}}(t_i^0) \; | \; t_j^{\rm m} < t_i^{\rm m} \},
 \\
 & \text{given:} 
 \\
 & \quad p_i (t_i^0) = p^0, \,\, v_i (t_i^0) = v_i^0, \\
 & \quad p_i (t_i^{\rm f}) = p^{\rm f}, \,\, u_i (t_i^{\rm f}) = 0. 
\end{split}
\end{align}
\end{problem}

Given the uncertainty quantification of stochastic variables we derived in Section~\ref{sub:uncertainty} and the constraint tightening technique \cite{hewing2019cautious}, the lateral probabilistic constraint \eqref{eq:cc-lateral} is equivalent to the following deterministic form 
\begin{equation}
\label{eq:cc-lateral-a}
\begin{split}
& t_i^{\rm{f}} - \mu_{t_k^{\rm{f}}} \ge \delta_l + z \sigma_{t_k^{\rm{f}}}, \\
\text{OR}\quad & t_i^{\rm{f}} - \mu_{t_k^{\rm{f}}} \le -\delta_l - z \sigma_{t_k^{\rm{f}}},
\end{split}
\end{equation}
where $z = \sqrt{2}\, \mathrm{erf}^{-1}(2\xi-1)$ with $\mathrm{erf}^{-1}(\cdot)$ is the inverse error function. 
Likewise, the rear-end probabilistic constraints \eqref{eq:cc-rearend-1} and \eqref{eq:cc-rearend-2} can be respectively transformed to deterministic constraints as follows 
\begin{equation}
\label{eq:cc-rearend-1a}
p_i(t) - \mu_{p_k} (t - \delta_r) \le - d_{\min} - z \sigma_{p_k} (t - \delta_r), \, \forall t \in [t_i^0, t_k^{\rm f}],
\end{equation}
and 
\begin{equation}
\label{eq:cc-rearend-2a}
p_i(t) - \mu_{p_k} (t - \delta_r) \le - d_{\min} - z \sigma_{p_k} (t - \delta_r), \, \forall t \in [t_i^{\rm m}, t_k^{\rm f}].
\end{equation}

Thus, for solving Problem~\ref{prb:cc-optimal_MZ}, the probabilistic constraints in \eqref{eq:cc-optimal_MZ} are replaced by \eqref{eq:cc-lateral-a}, \eqref{eq:cc-rearend-1a}, and \eqref{eq:cc-rearend-2a}, which result in an equivalent deterministic optimization problem.

\subsection{Replanning}
\label{sub:replanning}

\begin{algorithm}[tb!]
\small
\caption{Replanning mechanism at time $t^{\rm c}$}
\label{alg:replan}
\begin{algorithmic}[1]
\Require $t^{\rm c}$, $\zeta$, $\tilde{t}_k, \,{\forall k \in \HHH(t^{\rm c})}$

\State $\texttt{Replan} \leftarrow \texttt{False}$
\For {$k \in \HHH (t^{\rm c})$}
	\State Compute $\hat{\tau}_k (t^{\rm c})$
 \If {$t^{\rm c} = t_k^0$}
	\State $\tilde{t}_k \leftarrow t^{\rm c}$
 \State Train $\BLR_k$
	\State Compute and store $\mu_{\tau_k} (\tilde{t}_k), \sigma^2_{\tau_k} (\tilde{t}_k) $ %
 	\State Predict $t_k^{\rm m}$ and $\bbsym{\phi}_k$ using \eqref{eq:t_f_blr} and \eqref{eq:hdv-poly}/\eqref{eq:hdv-poly-1}
	\ElsIf {$\hat{\tau}_k (t^{\rm c}) \notin \mathrm{CI}_\zeta (\tau_k (\tilde{t}_k))$ or $\texttt{Replan} = \texttt{True}$}
	\State $\tilde{t}_k \leftarrow t^{\rm c}$
 \State Retrain $\BLR_k$
	\State Compute and store $\mu_{\tau_k} (\tilde{t}_k), \sigma^2_{\tau_k} (\tilde{t}_k) $ %
	\State Predict $t_k^{\rm m}$ and $\bbsym{\phi}_k$ using \eqref{eq:t_f_blr} and \eqref{eq:hdv-poly}/\eqref{eq:hdv-poly-1}
	\State $\texttt{Replan} \leftarrow \texttt{True}$
	\EndIf
\EndFor
\State Construct $\AAA' (t^{\rm c})$ using Definition~\ref{def:replan-set}
\For {$i \in \AAA(t^{\rm c})$}
	\If {$t^{\rm c} = t_i^0$}
	\State Solve Problem~\ref{prb:cc-optimal_MZ} given $p_i(t_i^0)$ and $v_i(t_i^0)$
	\ElsIf {\texttt{Replan} and $i \in \AAA' (t^{\rm c})$}
	\State Solve Problem~\ref{prb:cc-replan} given $p_i(t^{\rm c})$ and $v_i(t^{\rm c})$
	\EndIf 
\EndFor
\end{algorithmic}
\end{algorithm} 

Since the future trajectory and merging time for any HDV derived in Section~\ref{sub:uncertainty} are computed based on the prediction of $\tau_k$ at $t_k^0$, the predictions are not reliable if $\hat{\tau}_k (t)$ at $t > t_k^0 $, where $\hat{\tau}_k$ denotes the actual observation of $\tau_k$ obtained by solving \eqref{eq:newell}, is highly different to $\tau_k (t_k^0)$. 
Under this discrepancy, the planned trajectories for the CAVs may not always ensure safe maneuvers.
To address this issue, next, we present a mechanism for replanning based on checking the accuracy of the BLR predictions.
First, we define \emph{replanning instances} and how to determine replanning instances as follows.
\begin{definition}
\label{def:replanning}
A time instance $t^{\rm{c}} \in\mathbb{R}_{\ge0}$ is a replanning instance if at $t^{\rm{c}}$ we need to replan for the CAVs in the control zone. 
At any time $t^{\rm{c}}$, we check whether $t^{\rm{c}}$ is a replanning instance if there exits \HDV{k} $\in \HHH(t^{\rm{c}})$ such that $\hat{\tau}_k (t^{\rm{c}}) \notin \mathrm{CI}_\zeta (\tau_k (\tilde{t}))$ where $\mathrm{CI}_\zeta (\cdot)$ denotes the $\zeta . 100\%$ confidence interval of BLR prediction with $\zeta \in (0,1)$, and $\tilde{t}_k$ is the time that the last prediction for $\tau_k$ is stored.
\end{definition}

Definition~\ref{def:replanning} implies that replanning is activated at time $t^{\rm{c}}$ if there is an HDV, \eg \HDV{k}, whose actual time shift at $t^{\rm{c}}$ is outside the $\zeta . 100\%$ confidence interval of the last stored prediction.
The time that the last prediction for $\tau_k$ is stored can be either the entry time of \HDV{k} or the previous replanning instance.
Once replanning is activated, we retrain the BLR model, update the trajectory and merging time predictions for the HDVs, and resolve Problem~\ref{prb:cc-optimal_MZ} given new initial conditions for some specific CAVs.
The set of CAVs that need replanning is given in the following definition. 
\begin{definition}
\label{def:replan-set}
At a replanning instant $t^{\rm c}$, let 
\[\HHH' (t^{\rm c}) := \{ k \in \HHH (t^{\rm c}) \,|\, \hat{\tau}_k (t^{\rm{c}}) \notin \mathrm{CI}_\zeta (\tau_k (\tilde{t}))\}\] 
be set of all HDVs that violate the condition ${\hat{\tau}_k (t^{\rm{c}}) \in \mathrm{CI}_\zeta (\tau_k (\tilde{t}))}$.
Let \HDV{j} be the HDV with the minimum predicted merging time in $\HHH' (t^{\rm c})$, \ie ${\mu_{t_j^{\rm m}} \le \mu_{t_k^{\rm m}}, \, \forall k \in \HHH' (t^{\rm c})}$.
The set of CAVs that need replanning is determined as follows
\begin{equation}
\begin{multlined}
\AAA' (t^{\rm c}) := \{ i \in \AAA (t^{\rm c}) \cap \RRR_{j,\rm S} (t^{\rm c}) \,|\, t_i^{0} > t_j^{0}\} \\ 
\cup 
\{ i \in \AAA (t^{\rm c}) \cap \RRR_{j,\rm N} (t^{\rm c}) \,|\, t_i^{\rm m} > \mu_{t_j^{\rm m}} - \rho_l \},
\end{multlined}
\end{equation}
where $t_i^{\rm m}$ is the planned merging time for \CAV{i}.
\end{definition}
Definition~\ref{def:replan-set} means that we replan for \CAV{i} either if (1) \CAV{i} travels on the same road to \HDV{j} and enters the control zone after \HDV{j} or (2) \CAV{i} travels on the neighboring road to \HDV{j} and the planned merging time is greater than $\mu_{t_j^{\rm m}} - \rho_l$.
The stochastic time-optimal control problem at any time $t^{\rm c}$ when a replanning event is activated can be given as follows.
\begin{problem} 
(\textbf{Stochastic replanning in the control zone})
At the time $t^{\rm c}$ with an replanning event, \CAV{i} $\in \AAA' (t^{\rm c})$ solves the following time-optimal control problem
\label{prb:cc-replan}
\begin{align}
\begin{split}
 &\minimize_{t_i^{\rm f} \in \mathcal{T}_i(t^{\rm c})} \quad t_i^{\rm f} 
 \\
 & \subjectto 
 \\
 & \quad \eqref{eq:uconstraint}, \eqref{eq:vconstraint}, \eqref{eq:optimalTrajectory},
 \\
 & \quad \eqref{eq:cc-lateral},\; \forall \, k \in \RRR_{i,\mathrm{N}}(t^{\rm c}), 
 \\
 & \quad \eqref{eq:cc-rearend-1},\; k = \max \, \{ j \in \RRR_{i,\mathrm{S}}(t^{\rm c}) \; | \; j < i \}, 
 \\
 & \quad \eqref{eq:cc-rearend-2},\; k = \max \, \{ j \in \RRR_{i,\mathrm{N}}(t^{\rm c}) \; | \; t_j^{\rm m} < t_i^{\rm m} \},
 \\
 & \text{given:} 
 \\
 & \quad p_i (t^{\rm c}), \,\, v_i (t^{\rm c}), \\
 & \quad p_i (t_i^{\rm f}) = p^{\rm f}, \,\, u_i (t_i^{\rm f}) = 0. 
\end{split}
\end{align}
\end{problem}

The replanning mechanism is thus summarized in Algorithm~\ref{alg:replan}.

\section{Simulation Results}
\label{sec:sim}

In this section, we demonstrate the control performance of the proposed framework by numerical simulations.

\subsection{Simulation Setup}

\begin{table}[tb!]
\caption{Parameters of the trajectory planning framework.}
\label{tab:parameters} 
\centering
\begin{tabular}{ p{0.09\textwidth} p{0.10\textwidth} p{0.09\textwidth} p{0.10\textwidth} }
\toprule[1pt]%
\textbf{Parameters} & \textbf{Values} & \textbf{Parameters} & \textbf{Values} \\
\midrule[0.5pt] %
$v_{\max}$ & $\SI{30.0} {m/s}$ & $v_{\min}$ & $\SI{3.0} {m/s}$ \\
$a_{\max}$ & $\SI{3.0} {m/s^2}$ & $a_{\min}$ & $\SI{-4.0} {m/s^2}$ \\
$\rho_{\rm l}$ & $\SI{2.5}{s}$ & $\rho_{\rm r}$ & $\SI{1.5}{s}$ \\
$d_{\min}$ & $\SI{10}{m}$ & $H$ & $20$ \\
$\xi$ & $0.95$ & $\zeta$ & $0.8$ \\
\bottomrule[1pt] %
\end{tabular}
\end{table}

For our simulation, we used PTV\textendash VISSIM \cite{evanson2017connected} which is a commercial software for simulating microscopic multimodal traffic flow. 
PTV\textendash VISSIM provides a human-driven psycho-physical perception model created by Wiedemann \cite{wiedemann1974simulation}. 
To emulate the behavior of human drivers in an unsignalized merging scenario, we leveraged the network object called ``conflict areas'' of the software where we assigned undetermined priority for the vehicles moving on two roadways.
In the simulation, we considered a merging scenario with a buffer zone of length \SI{70}{m}, a control zone of length \SI{430}{m} (\SI{350}{m} upstream and \SI{80}{m} downstream of the merging point), and a virtual projection zone of length \SI{100}{m}.
The simulation environment in PTV\textendash VISSIM is shown in Fig~\ref{fig:vissim}.
The proposed trajectory planning framework was implemented using the Python programming language with the parameters given in Table~\ref{tab:parameters}.
In addition, the speed of congestion wave $w$ in Newell's car-following model was chosen based on the traffic volume such that $3600\, w/n = 10$, where $n$ denotes the traffic volume in vehicles per hour, which implies that the average standstill spacing is $\SI{10}{m}$.
Videos and data of the simulations can be found at \url{https://sites.google.com/cornell.edu/tcst-cav-mt}.

\subsection{Results and Discussions}

\begin{figure}[!t]
 \begin{center}
 \includegraphics[scale=0.22]{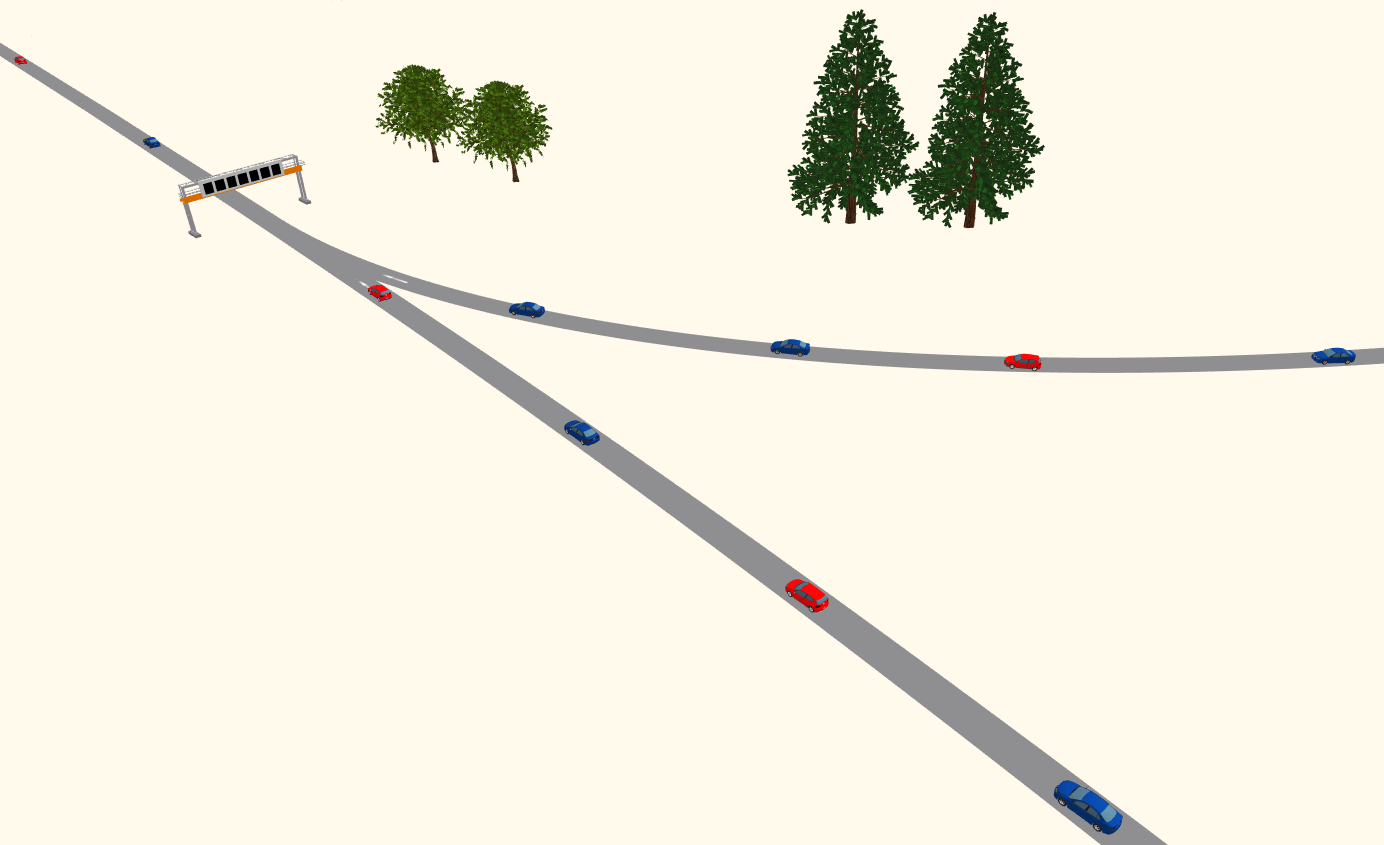}
 \end{center}
\caption{The simulation environment in PTV\textendash VISSIM.} 
\label{fig:vissim}
\end{figure}

\begin{figure*}[!t]
\hspace{-7.5pt}
\begin{subfigure}[t]{.245\textwidth}
\includegraphics[scale=0.29]{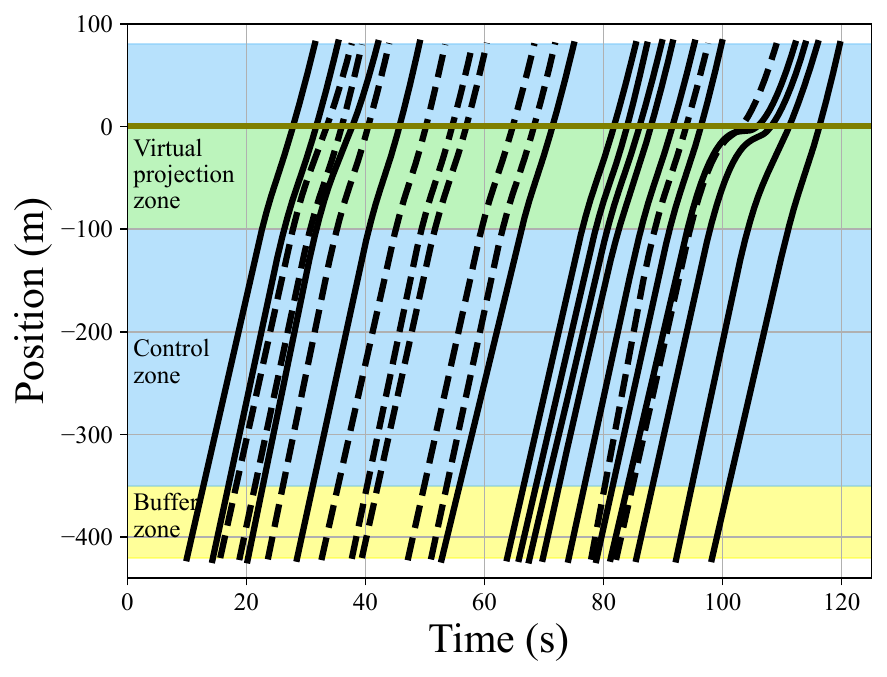}
\caption{$0\%$ penetration rate}
\label{fig:sim_pene-a}
\end{subfigure}
\begin{subfigure}[t]{.245\textwidth}
\includegraphics[scale=0.29]{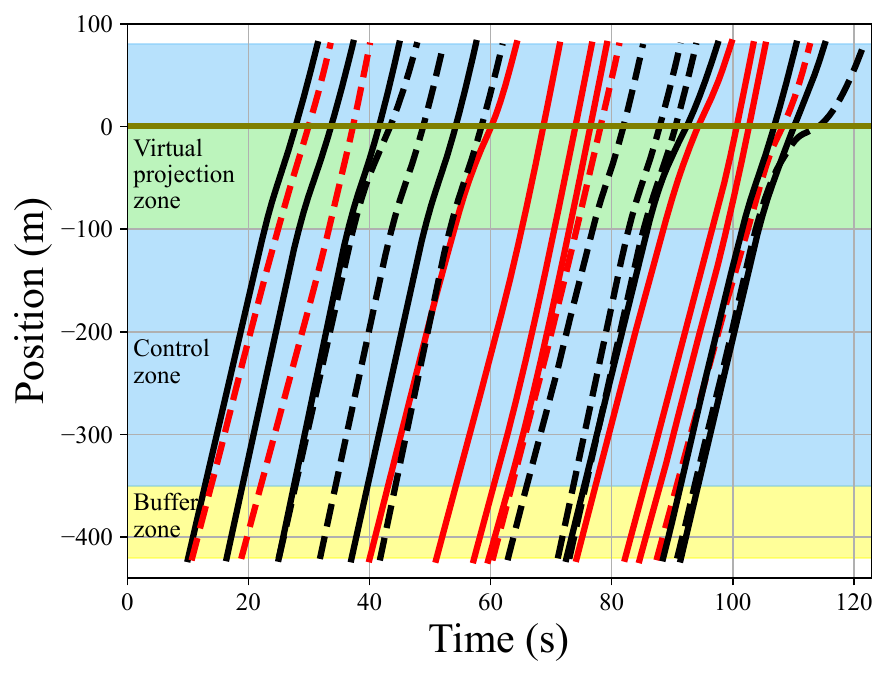}
\caption{$40\%$ penetration rate}
\label{fig:sim_pene-b}
\end{subfigure}
\begin{subfigure}[t]{.245\textwidth}
\includegraphics[scale=0.29]{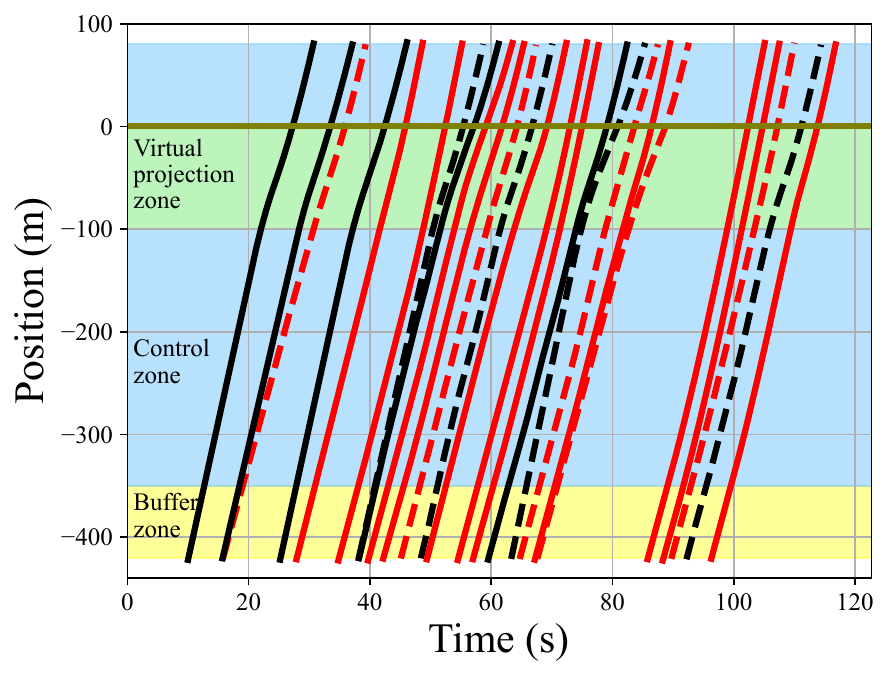}
\caption{$60\%$ penetration rate}
\label{fig:sim_pene-c}
\end{subfigure}
\begin{subfigure}[t]{.245\textwidth}
\includegraphics[scale=0.29]{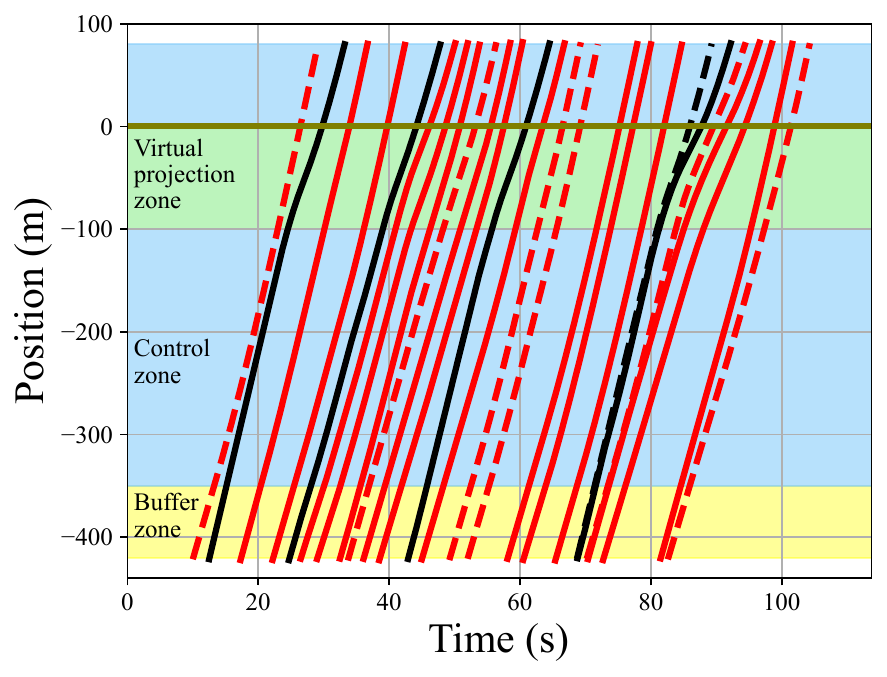}
\caption{$80\%$ penetration rate}
\label{fig:sim_pene-d}
\end{subfigure}

\begin{subfigure}[t]{.245\textwidth}
\includegraphics[scale=0.29]{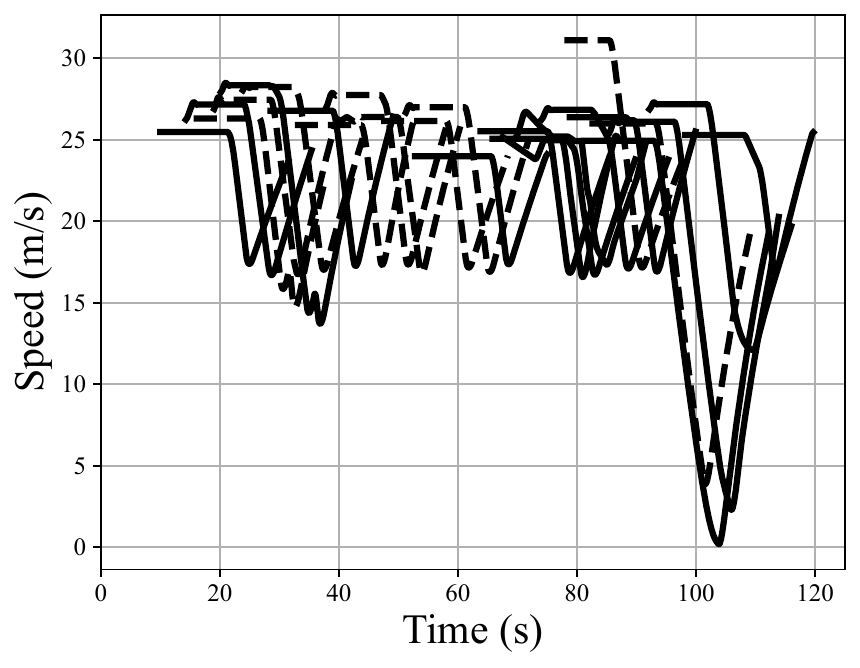}
\caption{$0\%$ penetration rate}\
\label{fig:sim_pene-e}
\end{subfigure}
\begin{subfigure}[t]{.245\textwidth}
\includegraphics[scale=0.29]{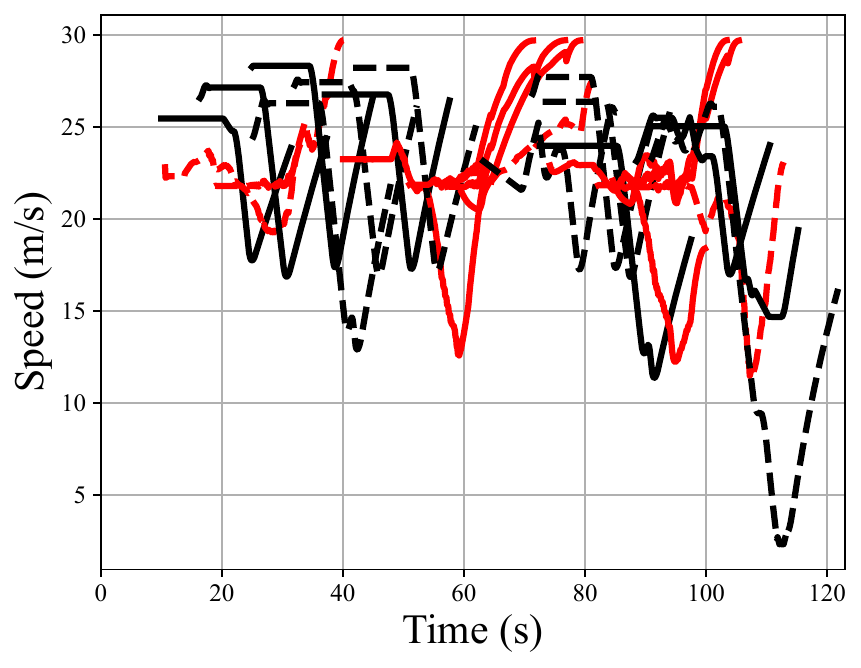}
\caption{$40\%$ penetration rate}
\label{fig:sim_pene-f}
\end{subfigure}
\begin{subfigure}[t]{.245\textwidth}
\includegraphics[scale=0.29]{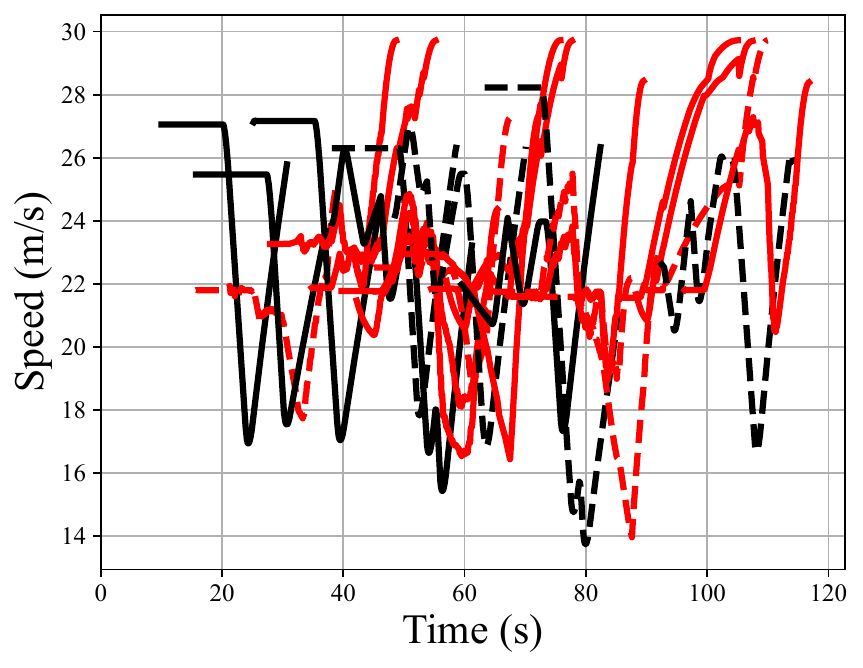}
\caption{$60\%$ penetration rate}
\label{fig:sim_pene-g}
\end{subfigure}
\begin{subfigure}[t]{.245\textwidth}
\includegraphics[scale=0.29]{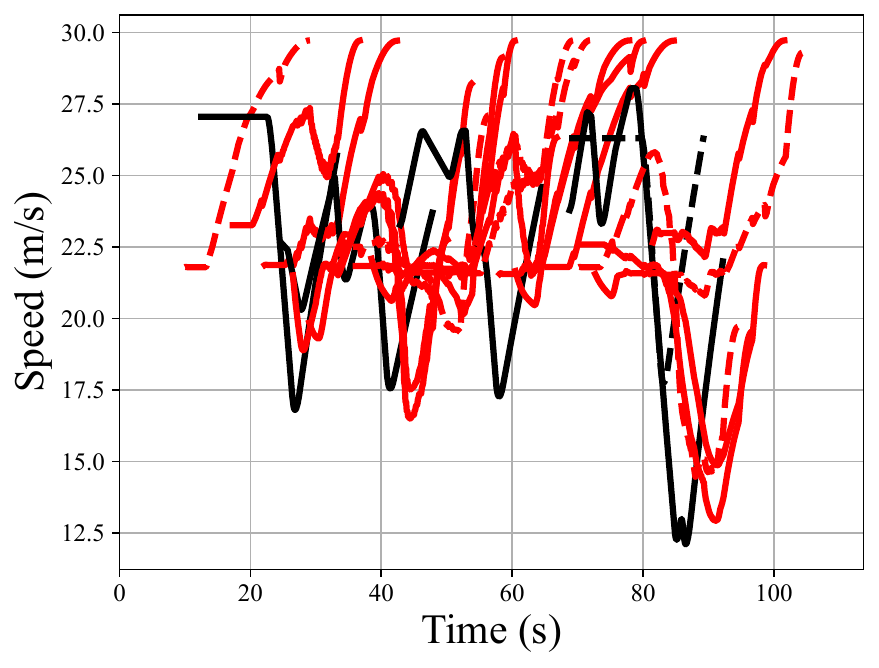}
\caption{$80\%$ penetration rate}
\label{fig:sim_pene-h}
\end{subfigure}
\caption{Position trajectories and speed profiles of the first 25 vehicles in four simulations with different penetration rates. 
The trajectories for CAVs and HDVs are represented by red and black curves, respectively.
The vehicles moving on different roads are distinguished by solid curves and dashed curves.} 
\label{fig:sim_pene}
\end{figure*}

\begin{table}
\centering
\caption{Average travel time (in seconds) under different penetration rates and traffic volumes.}
\label{tab:traveltime}
\begin{tabular}{ K{3.25cm} | K{0.6cm} K{0.6cm} K{0.6cm} K{0.6cm} K{0.6cm} }
\toprule[1.2pt]%
\backslashbox{Traffic\\volume}{Penetration\\rate} & $0\%$ & $40\%$ & $60\%$ & $80\%$ & $100\%$ \\
\midrule[0.6pt]
$800$ (veh/h) & $23.9$ & $22.4$ & $22.3$ & $21.2$ & $20.4$ \\ %
$1000$ (veh/h) & $25.6$ & $23.4$ & $22.7$ & $21.6$ & $21.0$ \\ %
$1200$ (veh/h) & $29.1$ & $25.0$ & $24.3$ & $23.1$ & $21.7$ \\
\bottomrule[1.2pt]
\end{tabular}
\end{table}

We conducted multiple simulations for three traffic volumes: 800, 1000, and 1200 vehicles per hour along with five different penetration rates: $0\%$, $40\%$, $60\%$, $80\%$, and $100\%$.
In each simulation, we collected data for $500$ seconds to compute the average travel time of the vehicles and reported the results in Table~\ref{tab:traveltime}.
As can be seen from the table, at higher penetration rates, average travel times significantly improve compared to baseline traffic consisting solely of HDVs across all tested traffic volumes.
For example, in the simulation with a high traffic volume of $1200$ vehicles per hour, $40\%$, $60\%$, $80\%$, and $100\%$ penetration rates can reduce average travel time by $14.1\%$, $16.5\%$, $20.6\%$, and $25.4\%$, respectively.
The results also suggest that high penetration rates may be necessary for enhancing mixed traffic under high-volume conditions.
Next, we show the position trajectories and speed profiles of the first 25 vehicles in four simulations under $0\%$, $40\%$, $60\%$, and $80\%$ penetration rates and the traffic volume of $1200$ vehicles per hour in Fig.~\ref{fig:sim_pene}.
The trajectories and speed profiles for $100\%$ CAV coordination are similar to previous studies, \eg \cite{mahbub2020decentralized}, and are thus omitted.
Overall, the results show that under partial penetration rates, \ie $40\%$, $60\%$, and $80\%$, the proposed framework guarantees safe co-existence between CAVs and HDVs (\cf Figures~\ref{fig:sim_pene-b}, \ref{fig:sim_pene-c}, and \ref{fig:sim_pene-d}).
Moreover, Figures~\ref{fig:sim_pene}(e\textendash h) suggest the potential benefits of coordination under increased CAV penetration rates in reducing traffic disruption.
It is observed that HDVs generally exhibit more abrupt deceleration and acceleration compared to CAVs.

To better illustrate the advantages of the replanning mechanism, we show in Fig.~\ref{fig:sim_replan} the position trajectories of some vehicles in the simulation with $60\%$ penetration rate where without replanning the safety constraints are violated.
The top panels of Fig.~\ref{fig:sim_replan} reveal that the optimal trajectory of the CAVs, derived at the entry of the control zone, may cause a collision with either the preceding HDV or the HDV entering from the neighboring road due to the discrepancy between the HDV’s predicted trajectory and the actual trajectory.
On the other hand, the bottom panels demonstrate that with the proposed replanning mechanism, the CAVs are able to detect the changes in human driving behavior and replan a new trajectory to avoid collisions with the HDVs. 

Safe maneuvers for CAVs can be further enhanced by using probabilistic constraints. 
In Fig.~\ref{fig:sim_cc}, we show the deterministic and robust trajectories derived at the control zone entry for particular CAVs in two simulations.
For comparison purposes, we do not consider replanning in those simulations.
In the first simulation (Fig.~\ref{fig:sim_cc-a}), we define the unsafe region for merging time that is determined by values at which the tightened lateral constraint \eqref{eq:cc-lateral-a} is violated. 
Likewise, in the second simulation (Fig.~\ref{fig:sim_cc}-b), from the distribution of the time shift prediction, we compute the unsafe region where the tightened rear-end constraint \eqref{eq:cc-rearend-1a} is violated.
We can observe that in both cases the robust trajectory can ensure that the trajectory and merging time of the CAV do not invade the unsafe regions.
Conversely, the deterministic trajectory violates the unsafe regions and may result in slightly more aggressive behavior.
Note that the cautiousness of the stochastic planning framework can be adjusted by changing the probability of constraint satisfaction.

\begin{figure}[!t]
\hspace{-5pt}
\begin{subfigure}{.49\textwidth}
\includegraphics[scale=0.29]{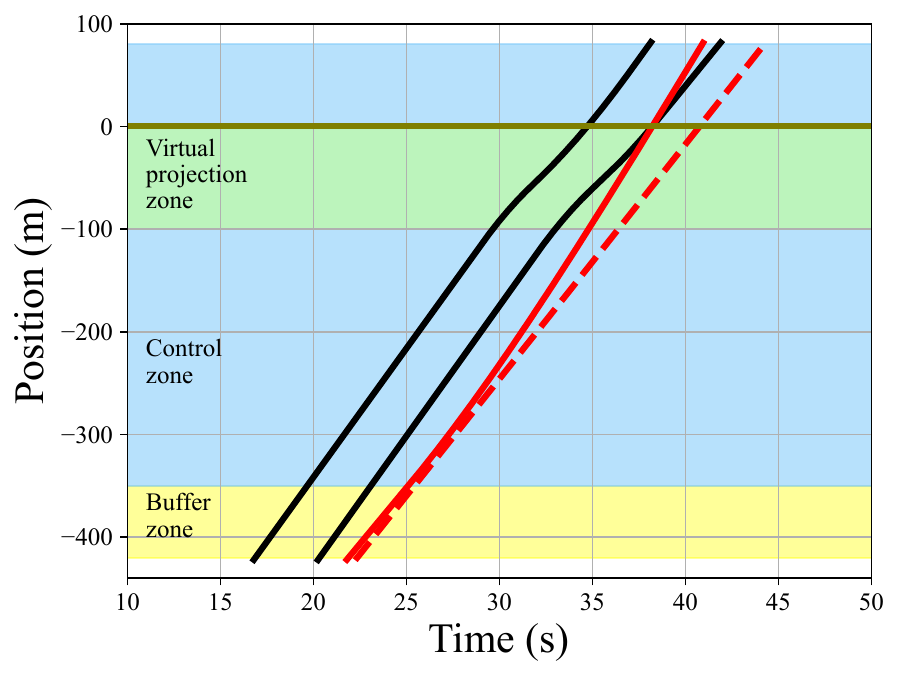}
\includegraphics[scale=0.29]{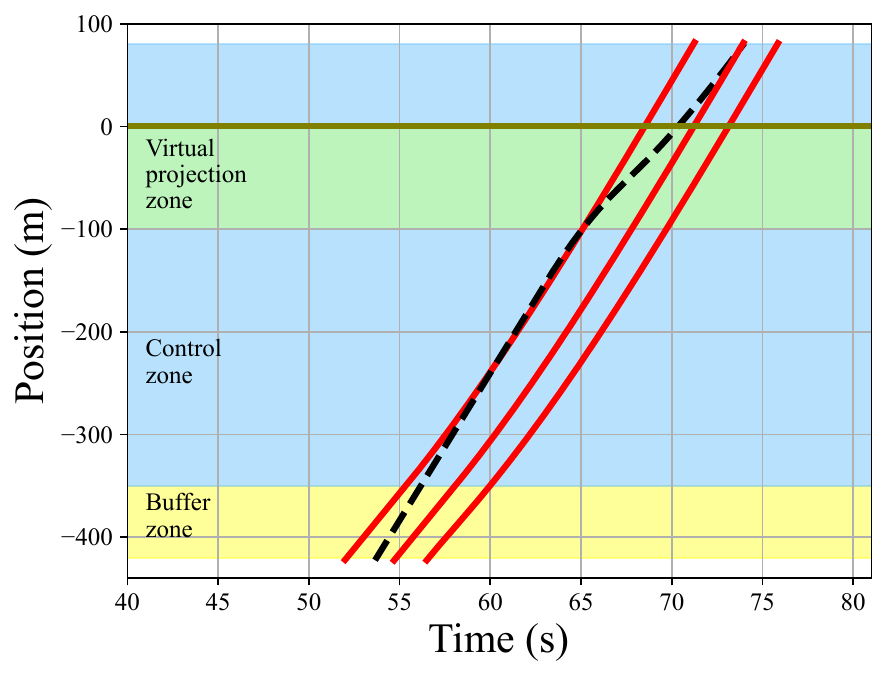}
\caption{Without replanning}
\label{fig:sim_replan-a}
\end{subfigure}

\hspace{-5pt}
\begin{subfigure}{.49\textwidth}
\includegraphics[scale=0.29]{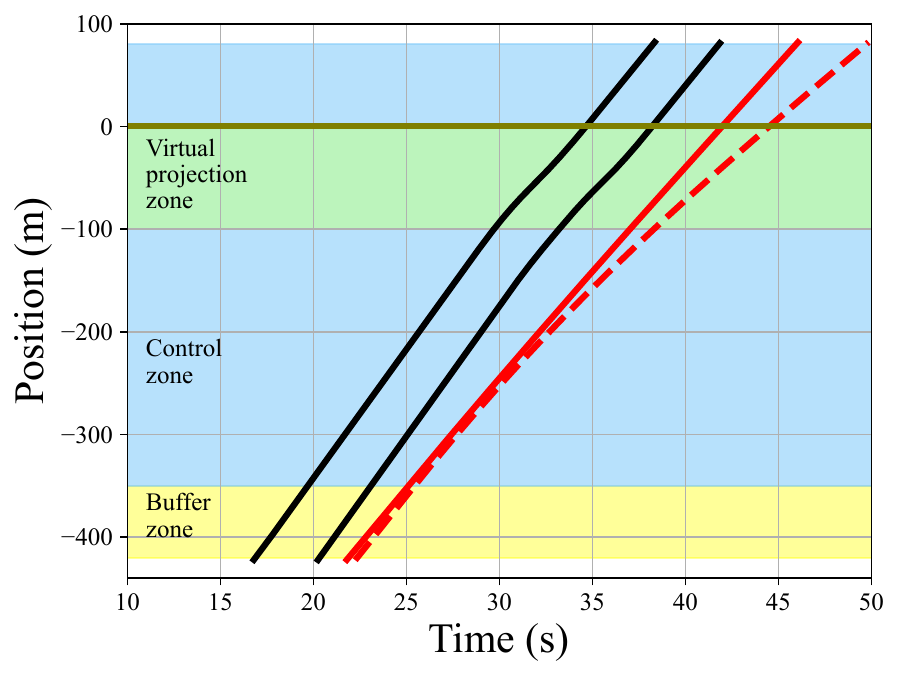}
\includegraphics[scale=0.29]{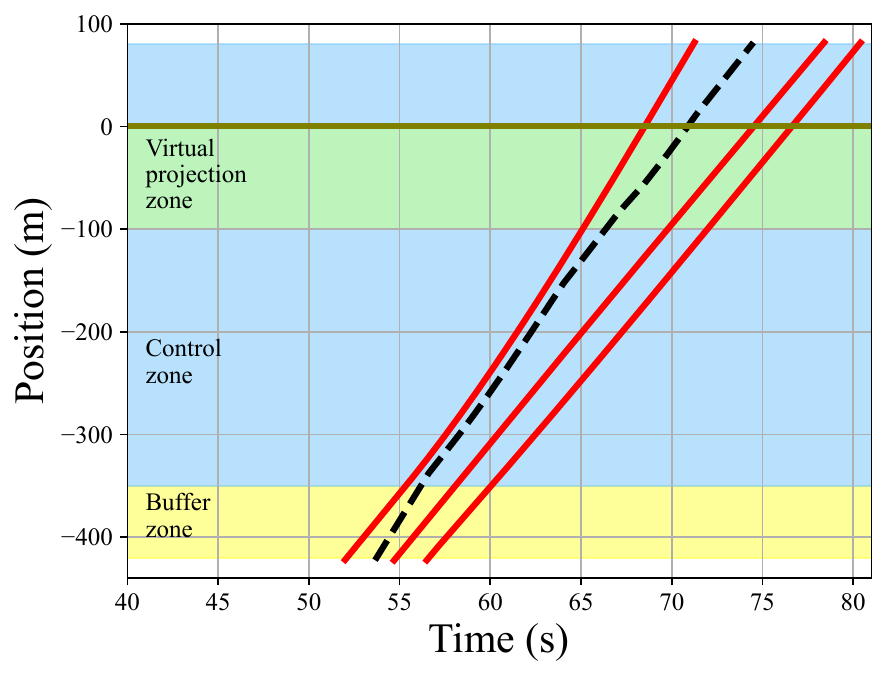}
\caption{With replanning}
\label{fig:sim_replan-d}
\end{subfigure}

\caption{Position trajectories for CAVs (red) and HDVs (black) in simulations without (a) and with replanning (b). In the simulations without replanning, the safety constraints are activated for CAVs and HDVs entering from the same road (a-left) and from the neighboring road (a-right).
The vehicles moving on different roads are distinguished by solid curves and dashed curves.} 
\label{fig:sim_replan}
\end{figure}

\begin{figure}[!t]
\hspace{-5pt}
\begin{subfigure}{.245\textwidth}
\includegraphics[scale=0.29]{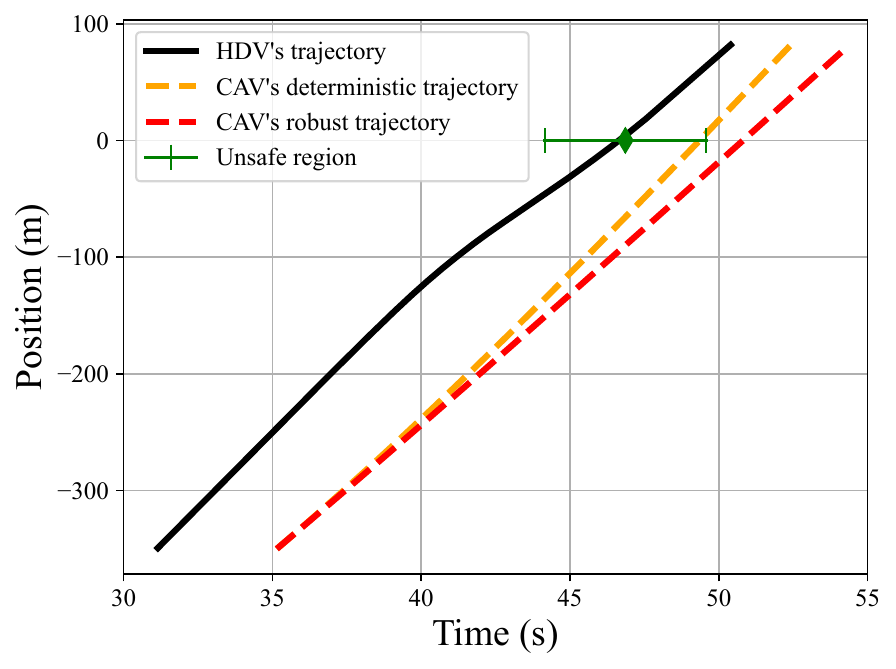}
\caption{Lateral constraint}
\label{fig:sim_cc-a}
\end{subfigure}
\hspace{-5pt}
\begin{subfigure}{.245\textwidth}
\includegraphics[scale=0.29]{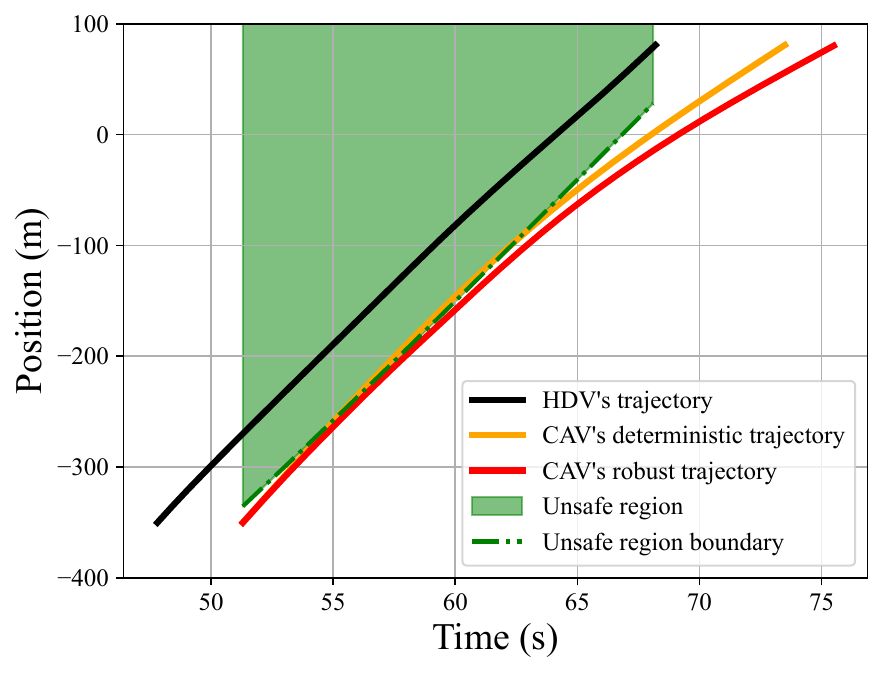}
\caption{Rear-end constraint}
\label{fig:sim_cc-b}
\end{subfigure}
\caption{Comparison between deterministic and robust trajectories for CAVs.
The vehicles moving on different roads are distinguished by solid curves and dashed curves.} 
\label{fig:sim_cc}
\end{figure}

\section{Conclusions}
\label{sec:conc}

In this paper, we presented a stochastic time-optimal trajectory planning framework for CAVs in mixed-traffic merging scenarios. 
We proposed a data-driven Newell's car-following model in which the time shift is calibrated online using Bayesian linear regression for modeling human driving behavior and the virtual projection technique is used to capture the lateral interaction.
We applied the data-driven Newell's car-following model to predict the trajectories and merging times of HDVs along with quantifying the prediction uncertainties used for probabilistic constraints in the stochastic time-optimal control problem.
Finally, we developed a replanning mechanism to activate resolving the stochastic time-optimal control problem for CAVs if the last stored predictions are not sufficiently accurate compared to the actual observations.
The results from simulations validate that our proposed framework can ensure safe maneuvers for CAVs among HDVs, and under higher penetration rates the mixed traffic can be improved to some extent. 

There are several research directions that can be considered in our future work.
First, we will focus on extending the framework to consider more challenging scenarios such as multi-lane merges and intersections.
Additionally, since the interaction between CAVs and HDVs becomes more complex and the coordination framework's efficiency diminishes given high traffic volumes, the ideas from optimal routing \cite{Bang2022combined,bang2023routing} can be combined to control the traffic flow. 
Finally, we plan to validate the proposed framework in an experimental robotic testbed where human participants can drive robotic vehicles to constitute realistic mixed traffic \cite{chalaki2021CSM}.

\section*{Acknowledgments}
The authors would like to thank Dr. Ehsan Moradi-Pari at Honda Research Institute USA, Inc. for his valuable feedback on the manuscript.

\bibliographystyle{IEEEtran}
\bibliography{IEEEabrv,bib/refs_IDS,bib/refs_ctrl,bib/refs_cav,bib/refs_ml,bib/refs_others,bib/refs_mt,bib/my_pub}

\appendices
\section{Proof of Lemma 1}
\label{sec:lem1}
\begin{proof}
Note that $\lambda_k \sim \NNN (\lambda_k, \sigma^2_{\tau_k})$ where ${\lambda_k = t- \mu_{\tau_k}}$.
From Newell's car-following model, we have
\begin{equation} 
\begin{split}
p_k(t) &= p_j(t - \tau_k (t)) - w \tau_k\\
&
\begin{multlined} \label{eq:p_k}
= \phi_{j,3} \lambda_k^3 + \phi_{j,2} \lambda_k^2 + (\phi_{j,1} + w)\lambda_k + (\phi_{j,0} - w t).
\end{multlined}
\end{split}
\end{equation}
The predicted mean of $p_k(t)$ can be found by taking the expectation of \eqref{eq:p_k}, i.e.,
\begin{equation}
 \EE[p_k(t)]= \EE[\phi_{j,3} \lambda_k^3 + \phi_{j,2} \lambda_k^2 + (\phi_{j,1} + w)\lambda_k + (\phi_{j,0} - w t)].
\end{equation}
From the linearity of expectation, we have 
\begin{equation} \label{eq:Expp_k_Step1}
 \EE[p_k(t)]= \phi_{j,3}\EE[ \lambda_k^3] + \phi_{j,2} \EE[\lambda_k^2] + (\phi_{j,1} + w)\EE[\lambda_k] + (\phi_{j,0} - w t),
\end{equation}
where $\EE[ \lambda_k^n]$ denotes the $n^\text{th}$ moment of random variable $\lambda_k \sim \NNN (\lambda_k, \sigma^2_{\tau_k})$. The $n^\text{th}$ moment of random variable $\lambda_k $ can be obtained by evaluating the $n^\text{th}$ derivative of moment-generating function $M_{\lambda}$ with respect to the slack variable $\tau$ and setting $\tau$ equal to zero, namely, $\EE[ \lambda_k^n]=\dfrac{d^n}{d\tau^n} M_{\lambda}(\tau)\mid_{\tau=0}$. The moment-generating function of random variable $\lambda_k $ which is taken from a normal distribution is given by $M_e(\tau)=\exp(\tau \lambda_k + \dfrac{1}{2}\sigma_{\tau_k}^2\tau^2)$, where $\lambda_k$ and $\sigma_{\tau_k}^2$ denote the mean and variance of the random variable, respectively. Following the above process, the first, second, and third moments of $\lambda_k$ are derived as follows
\begin{align}
 \EE[ \lambda_k^3] &= \lambda_k^3 + 3\lambda_k \sigma_{\tau_k}^2, \label{eq:3moment} \\ 
 \EE[ \lambda_k^2] &= \lambda_k^2 + \sigma_{\tau_k}^2, \label{eq:2moment}\\ 
 \EE[ \lambda_k] &= \lambda_k. \label{eq:1moment}
\end{align}

Substituting \eqref{eq:3moment}-\eqref{eq:1moment} in \eqref{eq:Expp_k_Step1}, we get the the predicted mean of $p_k(t)$ as derived in \eqref{eq:mu-f}.
To find variance of $p_k(t)$, we employ $\sigma^2_{p_k}(t) = \EE[p_k(t)^2] -\EE[p_k(t)]^2$. 
The computation of the second term, $\EE[p_k(t)]^2$, can be obtained by squaring the \eqref{eq:mu-f}. However, to derive the first term, $\EE[p_k(t)^2]$, it is necessary first to compute $p_k(t)^2$, and afterward take its expectation. 
Utilizing \eqref{eq:p_k}, we have
\begin{equation} \label{eq:p_k_Square}
p_k(t)^2= (\phi_{j,3} \lambda_k^3 + \phi_{j,2} \lambda_k^2 + \phi_{j,1}^{\prime} \lambda_k + \phi_{j,0}^{\prime})^2,
\end{equation}
where, $\phi_{j,1}^{\prime}=\phi_{j,1} + w$ and $ \phi_{j,0}^{\prime}=\phi_{j,0} - w t$.
Expanding \eqref{eq:p_k_Square}, we have 
\begin{multline} \label{eq:p_k_Squared_Expand}
 p_k(t)^2= \lambda_k^6 \,{\phi_{j,3} }^2 +2\,\lambda_k^5 \,\phi_{j,2} \,\phi_{j,3} +2\,\lambda_k^4 \,\phi_{j,1}^{\prime} \,\phi_{j,3} + \lambda_k^4 \,{\phi_{j,2} }^2 \\
 +2\,\lambda_k^3 \,\phi_{j,0}^{\prime} \,\phi_{j,3} +2\,\lambda_k^3 \,\phi_{j,1}^{\prime} \,\phi_{j,2} +2\,\lambda_k^2 \,\phi_{j,0}^{\prime} \,\phi_{j,2} +\lambda_k^2 \,{\phi_{j,1}^{\prime} }^2 \\+2\,\lambda_k \,\phi_{j,0}^{\prime} \,\phi_{j,1}^{\prime} +{\phi_{j,0}^{\prime} }^2.
\end{multline}
To compute expectation of \eqref{eq:p_k_Squared_Expand}, we first need to derive fourth, fifth, and sixth moment of random variable $\lambda_k$ as
\begin{align}
 \EE[ \lambda_k^6] &= \lambda_k^6 +15\lambda_k^4 \,\sigma_{\tau_k}^2 +45\,\lambda_k^2 \,\sigma_{\tau_k}^4 +15\,\sigma_{\tau_k}^6\label{eq:6moment} \\ 
 \EE[ \lambda_k^5] &= \lambda_k^5 +10\,\lambda_k^3 \,\sigma_{\tau_k}^2 +15\,\lambda_k \,\sigma_{\tau_k}^4, \label{eq:5moment}\\ 
 \EE[ \lambda_k^4] &= \lambda_k^4+6\lambda_k^2\sigma_{\tau_k}^2+3\sigma_{\tau_k}^4. \label{eq:4moment}
\end{align}
Next, we can derive the second moment of random variable $p_k(t)$ i.e., $\EE[p_k(t)^2]$, by taking the expectation of \eqref{eq:p_k_Squared_Expand} using the linearity of expectation and \eqref{eq:3moment}-\eqref{eq:1moment} and \eqref{eq:6moment}-\eqref{eq:4moment}. Substituting the results in $\sigma^2_{p_k}(t) = \EE[p_k(t)^2] -\EE[p_k(t)]^2$ , and performing some simple algebraic manipulations, we obtain \eqref{eq:sigma-f}. 
The derivation of $\bbsym{\phi}_k$ in \eqref{eq:hdv-poly} results from equating the coefficients of two polynomials in the left-hand and right-hand sides of \eqref{eq:mu-f}, and the proof is complete.
\end{proof}

\end{document}